\keywords{polynomial hierarchy, Krom logic, descriptive complexity, second-order logic}
\theoremstyle{plain} 
\newcommand{\ols}[1]{\mskip.5\thinmuskip\overline{\mskip-.5\thinmuskip {#1} \mskip-.5\thinmuskip}\mskip.5\thinmuskip} 
\newcommand{\olsi}[1]{\,\overline{\!{#1}}} 
\newcommand\closure[1]{
	\tctestifnum{\count@stringtoks{#1}>1} 
	{\ols{#1}} 
	{\olsi{#1}} 
}
\long\def\count@stringtoks#1{\tc@earg\count@toks{\string#1}}
\long\def\count@toks#1{\the\numexpr-1\count@@toks#1.\tc@endcnt}
\long\def\count@@toks#1#2\tc@endcnt{+1\tc@ifempty{#2}{\relax}{\count@@toks#2\tc@endcnt}}
\def\tc@ifempty#1{\tc@testxifx{\expandafter\relax\detokenize{#1}\relax}}
\long\def\tc@earg#1#2{\expandafter#1\expandafter{#2}}
\long\def\tctestifnum#1{\tctestifcon{\ifnum#1\relax}}
\long\def\tctestifcon#1{#1\expandafter\tc@exfirst\else\expandafter\tc@exsecond\fi}
\long\def\tc@testxifx{\tc@earg\tctestifx}
\long\def\tctestifx#1{\tctestifcon{\ifx#1}}
\long\def\tc@exfirst#1#2{#1}
\long\def\tc@exsecond#1#2{#2}
\begin{document}

\title[Second-order revised Krom logic]{Capturing the polynomial hierarchy by\texorpdfstring{\\}{ }second-order revised Krom logic}
\thanks{Corresponding author: Shiguang Feng.}


\author[K.~Wang]{Kexu Wang}[a,*]
\author[S.~Feng]{Shiguang Feng\lmcsorcid{0000-0002-5110-3881}}[b,*]\thanks{$^*$ These authors contributed equally to this work.}
\author[X.~Zhao]{Xishun Zhao}[a]

\address{Institute of Logic and Cognition, Sun Yat-sen University, Guangzhou, 510275, China}	

\address{School of Computer Science and Engineering, Sun Yat-sen University, Guangzhou, 510006, China}	
\email{fengshg3@mail.sysu.edu.cn}  




\begin{abstract}
  \noindent We study the expressive power and complexity of second-order revised Krom logic (SO-KROM$^{r}$). On ordered finite structures, we show that its existential fragment $\Sigma^1_1$-KROM$^r$ equals $\Sigma^1_1$-KROM, and captures NL. On all finite structures, for $k\geq 1$, we show that $\Sigma^1_{k}$ equals $\Sigma^1_{k+1}$-KROM$^r$ if $k$ is even, and $\Pi^1_{k}$ equals $\Pi^1_{k+1}$-KROM$^r$ if $k$ is odd. The results give an alternative logic to capture the polynomial hierarchy. We also introduce an extended version of second-order Krom logic (SO-EKROM). On ordered finite structures, we prove that SO-EKROM collapses to $\Pi^{1}_{2}$-EKROM and equals $\Pi^1_1$. Both SO-EKROM and $\Pi^{1}_{2}$-EKROM capture co-NP on ordered finite structures.
\end{abstract}

\maketitle

\section*{Introduction}

Descriptive complexity studies the logical characterization of computational complexity classes. It describes the property of a problem using the logical method. Computational complexity considers the computational resources such as time and space needed to decide a problem, whereas descriptive complexity explores the minimal logic that captures a complexity class. We say that a logic $\mathcal{L}$ captures a complexity class $\mathcal{C}$, if (i) the data complexity of $\mathcal{L}$ is in $\mathcal{C}$, i.e., for every $\mathcal{L}$ formula $\varphi$, the set of models of $\varphi$ is decidable in $\mathcal{C}$; and (ii) if a class of finite structures is in $\mathcal{C}$, then it is definable by an $\mathcal{L}$ formula. Moreover, if two logics $\mathcal{L}_1$ and $\mathcal{L}_2$ capture two complexity classes $\mathcal{C}_1$ and $\mathcal{C}_2$, respectively, then $\mathcal{L}_1$ and $\mathcal{L}_2$ have the same expressive power if and only if $\mathcal{C}_1$ is equal to $\mathcal{C}_2$~\cite{ebbinghaus1995}.
So the equivalence problem between different complexity classes can be transformed into the expressive power problem of different logics.
In 1974, Fagin showed that the existential fragment of second-order logic ($\exists$SO) captures NP~\cite{fagin1974}. This seminal work had been followed by many studies in the logical characterization of complexity classes. In 1982, Immerman and Vardi independently showed that the least fixed-point logic FO(LFP) captures P on ordered finite structures~\cite{immerman1982relational,vardi1982complexity}. In 1987, Immerman showed that the deterministic transitive closure logic FO(DTC) and transitive closure logic FO(TC) capture L and NL on ordered finite structures, respectively~\cite{immerman1987languages}. In 1989, Abiteboul and Vianu showed that the partial fixed-point logic FO(PFP) captures PSPACE on ordered finite structures~\cite{abiteboul1989fixpoint}. 

Whether P equals NP is an important problem in theoretical computer science. $\exists$SO captures NP on all finite structures. Hence, no logic capturing P on all finite structures implies $\mathrm{P}\neq \mathrm{NP}$. 
The capturing result of FO(LFP) for P is on ordered structures. Actually, FO(LFP) even cannot express the parity of a structure~\cite{ebbinghaus1995}. 
So finding a logic that can capture P effectively on all finite structures is of importance.
Many extensions of FO(LFP) had been studied. FO(IFP,\,\#) is obtained by adding counting quantifiers to the inflationary fixed-point logic FO(IFP) which has the same expressive power as FO(LFP)~\cite{gradel1992inductive,otto1996expressive}. FO(IFP,rank) is an extension of FO(IFP) with the rank operator that can define the rank of a matrix~\cite{dawar2008descriptive,dawar2009logics,atserias2009affine}. Both FO(IFP,\,\#) and FO(IFP,rank) are strictly more expressive than FO(LFP), but neither of them captures P on all finite structures~\cite{ebbinghaus1995,dawar2019approx}. Second-order logic and its fragments are further candidates of logics for P.
In~\cite{gradel1991expressive}, Gr\"{a}del showed that SO-HORN captures P on ordered finite structures. Feng and Zhao introduced second-order revised Horn logic (SO-HORN$^r$) and showed that it equals FO(LFP) on all finite structures~\cite{feng2012,feng2013}.

Similar to the results for P, it is easy to check that no logic capturing NL on all finite structures implies $\mathrm{NL} \neq \mathrm{NP}$.  Gr\"{a}del showed that SO-KROM captures NL on ordered finite structures~\cite{gradel1992}. Cook and Kolokolova introduced the second-order theory V-Krom of bounded arithmetic for NL that is based on SO-KROM~\cite{cook2004}. 
In this paper, we introduce second-order revised Krom logic (SO-KROM$^r$). It is an extension of SO-KROM by allowing the formula $\exists \bar{z} R\bar{z}$ in the clauses where $R$ is a second-order variable. SO-KROM$^r$ is strictly more expressive than SO-KROM. Its existential fragment $\Sigma^1_{1}$-KROM$^r$ is equivalent to SO-KROM on ordered finite structures.
For all $k\geq 1$, on all finite structures, we show that every $\Sigma^1_{k}$ formula is equivalent to a $\Sigma^1_{k+1}$-KROM$^r$ formula for even $k$,  and every $\Pi^1_{k}$ formula is equivalent to a $\Pi^1_{k+1}$-KROM$^r$ formula for odd $k$. Hence, every second-order formula is equivalent to an SO-KROM$^r$ formula. For the data complexity of SO-KROM$^{r}$, we show that $\Sigma^1_{k+1}$-KROM$^r$ is in $\Sigma^p_k$ for even $k$, and $\Pi^1_{k+1}$-KROM$^r$ is in $\Pi^p_k$ for odd $k$,  where $\Sigma^p_0 = \Pi^p_0 = \mathrm{P}$, $\Sigma^p_{k+1}$ is the set of decision problems solvable in nondeterministic polynomial time by a Turing machine augmented with an oracle in $\Sigma^p_{k}$, and $\Pi^p_{k+1}$ is the complement of $\Sigma^p_{k+1}$~\cite{STOCKMEYER1976}. The polynomial time hierarchy $\mathrm{PH} = \bigcup^{\infty}_{k=0} \Sigma^p_{k}$, which is contained within PSPACE. It is well-known that the second-order formulas $\Sigma^1_{k}$ (resp., $\Pi^1_{k}$) capture $\Sigma^p_{k}$ (resp., $\Pi^p_{k}$) ($k\geq 1$)~\cite{Immerman1998descrip}. Combining these we see that SO-KROM$^r$ gives an alternative logical characterization for PH, which is an interesting result in the field of descriptive complexity. The main results in the paper are summarized in Figure~\ref{fig-results}.

\begin{figure}[h]
	\begin{center}
		\begin{tikzpicture}
			\tikzset{expren/.style={thin,rounded corners=8pt,dashed}}
			\tikzset{dacomp/.style={thin,rounded corners=8pt}}
			\draw (0,1) node(sKrom) {$\Sigma^1_1$-KROM};
			\draw (2.2,1) node(Sig2) {$\Sigma^1_2$};
			\draw (4.4,1) node(Sig4) {$\Sigma^1_4$};
			\draw (7.7,1) node(Sig2k) {$\Sigma^1_{2k}$};
			\draw (5.9,0) node(dot1) {$\dots$};
			\draw (9.7,0) node(dot2) {$\dots$};
			\draw[expren] (-1.1,-0.3) rectangle (1,1.4);
			\draw[expren] (1.2,-0.3) rectangle (3.2,1.4);
			\draw[expren] (3.4,-0.3) rectangle (5.5,1.4);
			\draw[expren] (6.25,-0.3) rectangle (9.15,1.4);
			
			\draw (0,0) node(Sigr12) {$\Sigma^{rk}_1\equiv \Sigma^{rk}_2$};
			\draw (2.3,0) node(Sigr34) {$\Sigma^{rk}_3\equiv \Sigma^{rk}_4$};
			\draw (4.5,0) node(Sigr56) {$\Sigma^{rk}_5\equiv \Sigma^{rk}_6$};
			\draw (7.7,0) node(Sigr2k) {$\Sigma^{rk}_{2k+1}\equiv \Sigma^{rk}_{2k+2}$};
			\draw[dacomp] (-1.1,-1.3) rectangle (1,0.4);
			\draw[dacomp] (1.2,-1.3) rectangle (3.2,0.4);
			\draw[dacomp] (3.4,-1.3) rectangle (5.5,0.4);
			\draw[dacomp] (6.25,-1.3) rectangle (9.15,0.4);
			
			\draw (0,-1) node(NL) {NL};
			\draw (2.3,-1) node(Sigp2) {$\Sigma^p_2$};
			\draw (4.5,-1) node(Sigp4) {$\Sigma^p_4$};
			\draw (7.8,-1) node(Sigp2k) {$\Sigma^p_{2k}$};
			
			\draw (0,-2.2) node(pKrom) {$\Pi^1_1$-KROM};
			\draw (2.25,-2.2) node(Pi1) {$\Pi^1_1$};
			\draw (4.45,-2.2) node(Pi3) {$\Pi^1_3$};
			\draw (7.7,-2.2) node(Sig8) {$\Pi^1_{2k-1}$};
			\draw (5.9,-3.2) node(dot3) {$\dots$};
			\draw (9.7,-3.2) node(dot4) {$\dots$};
			\draw[expren] (-1.1,-3.5) rectangle (1,-1.8);
			\draw[expren] (1.2,-3.5) rectangle (3.2,-1.8);
			\draw[expren] (3.4,-3.5) rectangle (5.5,-1.8);
			\draw[expren] (6.25,-3.5) rectangle (9.15,-1.8);
			
			\draw (0,-3.2) node(Pir12) {$\Pi^{rk}_1$};
			\draw (2.3,-3.2) node(Pi23) {$\Pi^{rk}_2\equiv \Pi^{rk}_3$};
			\draw (4.5,-3.2) node(Pi45) {$\Pi^{rk}_4\equiv \Pi^{rk}_5$};
			\draw (7.8,-3.2) node(Pi2k) {$\Pi^{rk}_{2k}\equiv \Pi^{rk}_{2k+1}$};
			\draw[dacomp] (-1.1,-4.5) rectangle (1,-2.8);
			\draw[dacomp] (1.2,-4.5) rectangle (3.2,-2.8);
			\draw[dacomp] (3.4,-4.5) rectangle (5.5,-2.8);
			\draw[dacomp] (6.25,-4.5) rectangle (9.15,-2.8);
			
			\draw (0,-4.2) node(emy) {};
			\draw (2.25,-4.2) node(Pi1) {$\Pi^p_1$};
			\draw (4.45,-4.2) node(Pi3) {$\Pi^p_3$};
			\draw (7.7,-4.2) node(Pip2k) {$\Pi^p_{2k-1}$};
		\end{tikzpicture}
	\end{center}
	\caption{\label{fig-results} The expressive power and complexity of SO-KROM$^r$. $\Sigma^{rk}_k$ and $\Pi^{rk}_k$ denote $\Sigma^1_k$-KROM$^{r}$ and $\Pi^1_k$-KROM$^{r}$, respectively. The dashed rectangle parts show the equivalence relation between second-order formulas and SO-KROM$^{r}$ formulas. The solid rectangle parts show the capturing results of SO-KROM$^{r}$ for PH.}
\end{figure}
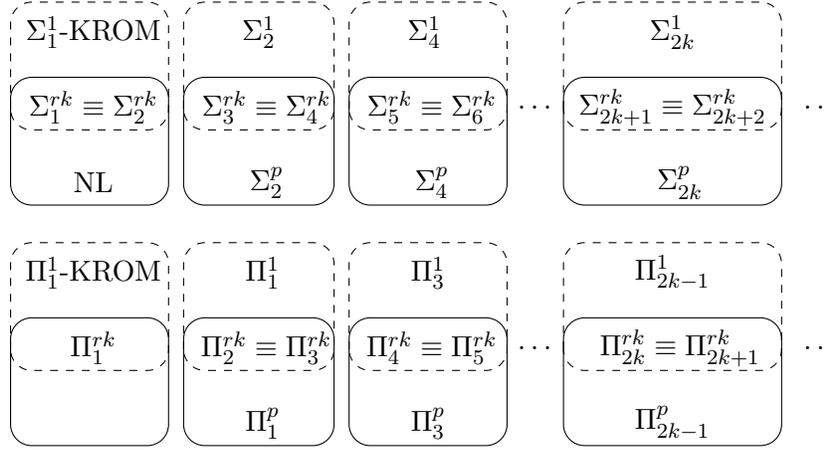

The paper is organized as follows. In Section~\ref{sec-pre}, we give the basic definitions and notations. In Section~\ref{sec-descrp-exst-kr}, we study the expressive power and complexity of the existential fragment of SO-KROM$^{r}$. In Section~\ref{sec-expre-sokr}, we study the descriptive complexity of SO-KROM$^{r}$. In Section~\ref{sec-exspkr}, we introduce second-order extended Krom logic and study its descriptive complexity. Section~\ref{sec-concln} is the conclusion of the paper.

\section{Preliminaries}\label{sec-pre}
Let $\tau=\{\mathbf{c}_{1},\mathbf{c}_{2},\dots,\mathbf{c}_{m},P_{1},P_{2},\dots,P_{n}\}$ be a vocabulary, where $\mathbf{c}_{1},$ $\mathbf{c}_{2},\dots,\mathbf{c}_{m}$ are constant symbols and $P_{1},P_{2},\dots,P_{n}$ are relation symbols. 
A \textit{$\tau$-structure} $\mathcal{A}$ is a tuple 
\[\langle A,\mathbf{c}_{1}^{A},\mathbf{c}_{2}^{A},\dots,\mathbf{c}_{m}^{A},P_{1}^{A},P_{2}^{A},\dots,P_{n}^{A}\rangle,\]
where $A$ is the domain of $\mathcal{A}$, and $\mathbf{c}_{1}^{A},\mathbf{c}_{2}^{A},\dots,\mathbf{c}_{m}^{A}$, $P_{1}^{A},P_{2}^{A},\dots,P_{n}^{A}$ are the interpretations of the constant and relation symbols over $A$, respectively. We assume the identity relation ``='' is contained in every vocabulary, and omit the superscript ``$A$'' in the notation when no confusion is caused.
We call $\mathcal{A}$ finite if its domain $A$ is a (nonempty) finite set. In this paper, all structures considered are finite. We use $|\ |$ to denote the cardinality of a set or the arity of a tuple, e.g., $|\{a,b,c\}|=3$ and $|\bar{x}|=3$ where $\bar{x}=(x_1,x_2,x_3)$, and $arity(X)$ to denote the arity of a relation symbol (variable) $X$.
A finite structure is \textit{ordered} if it is equipped with a linear order relation ``$\leq$'', a successor relation ``$\mathrm{SUCC}$'', and constants ``\textbf{min}'' and ``\textbf{max}'' interpreted as the minimal and maximal elements, respectively.

Given a logic $\mathcal{L}$, we use $\mathcal{L}(\tau)$ to denote the set of $\mathcal{L}$ formulas over vocabulary $\tau$. For better readability, the symbol ``$\tau$'' is omitted when it is clear from context. Given two logics $\mathcal{L}_1$ and $\mathcal{L}_2$, we use $\mathcal{L}_1\leq \mathcal{L}_2$ to denote that every $\mathcal{L}_1$ formula is equivalent to an $\mathcal{L}_2$ formula. If both $\mathcal{L}_1\leq \mathcal{L}_2$ and $\mathcal{L}_2\leq \mathcal{L}_1$ hold, then we write $\mathcal{L}_1\equiv \mathcal{L}_2$. 

\begin{defi}
	Given a vocabulary $\tau$, the second-order Krom logic over $\tau$, denoted by SO-KROM($\tau$), is a set of second-order formulas of the form
	\[
	Q_{1}R_{1}\cdots Q_{m}R_{m}\forall\bar{x}(C_{1}\wedge\cdots\wedge C_{n})
	\]
	where each $Q_{i}\in\{\forall,\exists\}$, $C_1,\dots,C_n$ are Krom clauses with respect to $R_{1},\dots,R_{m}$,
	more precisely, each $C_{j}$ is a disjunction of the form
	\[
	\beta_{1}\vee\cdots\vee\beta_{q}\vee H_{1}\vee H_{2},
	\]
	where
	\begin{enumerate}
		\item each $\beta_{s}$ for $s \in \{1,\dots, q\}$ is either $P\bar{y}$ or $\neg P\bar{y}$ $(P \in \tau)$;
		\item each $H_{t}$ is either $R_{i}\, \bar{z}$, $\neg R_{i}\, \bar{z}$ $(1\leq i\leq m)$, or $\bot$ (for false).
	\end{enumerate}
	If we replace (2) by
	\begin{enumerate}
		\item[(2$'$)] each $H_{t}$ is either $R_{i}\, \bar{z}$, $\neg R_{i}\, \bar{z}$, $\exists z_1\cdots \exists z_{arity(R_i)}\, R_{i}\, z_1\dots z_{arity(R_i)}$ $(1\leq i\leq m)$, or $\bot$ (for false),
	\end{enumerate}
	then we call this logic second-order revised Krom Logic, denoted by SO-KROM$^{r}(\tau)$. 
\end{defi}
We use $\Sigma_{k}^{1}\text{-KROM}^{r}$ (resp., $\Pi_{k}^{1}\text{-KROM}^{r}$) to denote the set of SO-KROM$^{r}$ formulas whose second-order prefix starts with an existential (resp., a universal) quantifier and alternates $k-1$ times between series of existential and universal quantifiers.

\begin{exa}
	A directed graph is strongly connected iff there exists a path between every pair of nodes. The strong connectivity problem is NL-complete, which is defined by
	\begin{description}
		\item [{Input}] a directed graph $G=(V,E)$,
		\item [{Output}] yes if $G$ is strongly connected, and no otherwise.
	\end{description}
	Since $\mathrm{NL} = \mathrm{co\text{-}NL}$, the complement of the strong connectivity problem is also NL-complete, which can be defined by the following $\Sigma_{1}^{1}\text{-KROM}^{r}$ formula
	\[\exists R\exists Y \forall x\forall y\forall z \left( 
	\begin{array}{l}
		(Exy\rightarrow Rxy)\wedge(Exy\wedge Ryz\rightarrow Rxz) \\
		\wedge (\neg Rxy \leftrightarrow Yxy) \wedge \exists u\exists v Yuv
	\end{array}\right) 
	\]
	where $R$ is the transitive closure of $E$, and $Y$ is the complement of $R$. A graph $G$ satisfies the formula iff there exist two nodes $a,b$ such that $a$ cannot reach $b$.
\end{exa}

SO-KROM is closed under substructures~\cite{gradel1992}, which means that if a structure satisfies a SO-KROM formula then all its substructures also satisfy the formula.
Because a non-strongly connected graph may be made strongly connected by removing nodes, SO-KROM cannot define the complement of the strong connectivity problem. The above example shows that SO-KROM$^{r}$ is strictly more expressive than SO-KROM.

\section{The expressive power and complexity of $\Sigma_{1}^{1}$-KROM$^{r}$}\label{sec-descrp-exst-kr}

In this section, we study the expressive power of the universal and existential fragments of SO-KROM$^{r}$,  and show that $\Sigma_{1}^{1}$-KROM$^{r}$ captures NL on finite ordered structures.

\begin{prop}\label{prop-Pikroeqfirt}
	Every $\Pi^1_1\text{-}\mathrm{KROM}^{r}$ formula is equivalent to a first-order formula $\forall \bar{x} \varphi$, where $\varphi$ is a quantifier-free $\mathrm{CNF}$ formula.
\end{prop}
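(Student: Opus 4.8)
The plan is to eliminate the universal second-order quantifiers $\forall R_{1}\cdots\forall R_{m}$ clause by clause, using that a Krom clause carries at most two second-order literals. Write the given formula as
\[
\Phi \;=\; \forall R_{1}\cdots\forall R_{m}\,\forall\bar{x}\,\bigwedge_{j=1}^{n} C_{j},
\qquad
C_{j}\;=\;\beta^{j}_{1}\vee\cdots\vee\beta^{j}_{q_{j}}\vee H^{j}_{1}\vee H^{j}_{2},
\]
where each $\beta^{j}_{s}$ is a first-order literal over $\tau$ and each $H^{j}_{t}$ is one of the options allowed by $(2')$. Since universal quantifiers of any sort commute with one another and distribute over $\wedge$, $\Phi$ is equivalent to $\forall\bar{x}\bigwedge_{j}\forall R_{1}\cdots\forall R_{m}\,C_{j}$; and since the $\beta^{j}_{s}$ mention no relation variable,
\[
\forall R_{1}\cdots\forall R_{m}\,C_{j}\;\equiv\;\bigl(\beta^{j}_{1}\vee\cdots\vee\beta^{j}_{q_{j}}\bigr)\;\vee\;\forall R_{1}\cdots\forall R_{m}\,(H^{j}_{1}\vee H^{j}_{2}).
\]
So the whole thing reduces to showing that, for each $j$, the closed second-order part $\forall R_{1}\cdots\forall R_{m}\,(H_{1}\vee H_{2})$ is equivalent to a quantifier-free first-order formula $\theta_{j}$, in fact one built only from $=$, $\top$ and $\bot$ and using no symbol from $\tau$.

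For that I would argue via satisfiability of the negation: $\forall R_{1}\cdots\forall R_{m}\,(H_{1}\vee H_{2})$ fails in a structure exactly when $\neg H_{1}\wedge\neg H_{2}$ is realizable by some choice of the $R_{i}$. Each $\neg H_{t}$ is $\top$ (if $H_{t}=\bot$), $\neg R_{i}\bar{z}$, $R_{i}\bar{z}$, or $\forall\bar{z}\,\neg R_{i}\bar{z}$ (if $H_{t}=\exists\bar{z}\,R_{i}\bar{z}$); each of these is a constraint on a single relation variable, and each is satisfiable on its own (respectively: vacuously; take $R_{i}$ empty; take $R_{i}$ a suitable singleton; take $R_{i}$ empty). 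Hence if $H_{1}$ and $H_{2}$ mention different relation variables, or one of them is $\bot$, the two constraints can be met simultaneously, so $\forall R_{1}\cdots\forall R_{m}\,(H_{1}\vee H_{2})$ is identically false and I set $\theta_{j}:=\bot$. If $H_{1},H_{2}$ both speak about the same $R_{i}$, a short check of the finitely many combinations is the crux: for $\{R_{i}\bar{z}_{1},\neg R_{i}\bar{z}_{2}\}$ the conjunction $\neg H_{1}\wedge\neg H_{2}$ is realizable iff $\bar{z}_{1}\neq\bar{z}_{2}$, so $\theta_{j}$ becomes the conjunction of componentwise equalities $\bigwedge_{l}(z_{1,l}=z_{2,l})$; for $\{\neg R_{i}\bar{z}_{1},\exists\bar{z}\,R_{i}\bar{z}\}$ the conjunction $\neg H_{1}\wedge\neg H_{2}$ is unsatisfiable, so $\theta_{j}:=\top$; and for all remaining same-variable combinations ($\{R_{i}\bar{z}_{1},R_{i}\bar{z}_{2}\}$, $\{\neg R_{i}\bar{z}_{1},\neg R_{i}\bar{z}_{2}\}$, $\{R_{i}\bar{z}_{1},\exists\bar{z}\,R_{i}\bar{z}\}$, and $H_{1}=H_{2}=\exists\bar{z}\,R_{i}\bar{z}$) it is realizable by the empty or the full relation, so $\theta_{j}:=\bot$.

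Putting the pieces back together, $\Phi\equiv\forall\bar{x}\bigwedge_{j}\bigl(\beta^{j}_{1}\vee\cdots\vee\beta^{j}_{q_{j}}\vee\theta_{j}\bigr)$: conjuncts with $\theta_{j}=\bot$ are already Krom/first-order clauses, conjuncts with $\theta_{j}=\top$ are discarded, and a conjunct with $\theta_{j}=\bigwedge_{l}(z_{1,l}=z_{2,l})$ is turned, by distributing $\vee$ over $\wedge$, into $\bigwedge_{l}\bigl(\beta^{j}_{1}\vee\cdots\vee\beta^{j}_{q_{j}}\vee z_{1,l}=z_{2,l}\bigr)$. This yields a quantifier-free CNF $\varphi$ over variables from $\bar{x}$ with $\Phi\equiv\forall\bar{x}\,\varphi$, as required. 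The one place that really needs care is the case analysis on $H_{1}\vee H_{2}$: one must see that distinct relation variables decouple completely (because every $\neg H_{t}$ is individually satisfiable) and must not miss the two special same-variable patterns — the $\{\neg R_{i}\bar{z}_{1},\exists\bar{z}\,R_{i}\bar{z}\}$ tautology, which is absent in plain SO-KROM and is the sole effect here of the new atom $\exists\bar{z}\,R_{i}\bar{z}$, and the positive/negative pair producing the equalities. It is also worth keeping in mind that the domain is nonempty, so the empty and the full relations are genuinely available as witnesses, and that $0$-ary relation symbols, if they are admitted, fall under the obvious degenerate reading.
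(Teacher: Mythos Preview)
Your argument is correct and is essentially the paper's own proof: the paper also handles the clauses one at a time, singling out the tautological pattern $\neg R_{i}\bar{z}_{1}\vee\exists\bar{z}\,R_{i}\bar{z}$ (clause dropped), the pattern $R_{i}\bar{z}_{1}\vee\neg R_{i}\bar{z}_{2}$ (replaced by $\bar{z}_{1}=\bar{z}_{2}$), and lumping all remaining shapes into ``delete the second-order literals,'' with exactly your justification that some valuation of the $R_{i}$ falsifies the second-order part. Your write-up is slightly more explicit than the paper's in justifying the quantifier-pushing step and in distributing $\bar{z}_{1}=\bar{z}_{2}$ into componentwise equalities to land in genuine CNF, but the route is the same.
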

\begin{proof}
	Given a $\Pi^1_1$-KROM$^{r}$ formula $\Phi= \forall X_1\dots \forall X_n \forall \bar{x}(C_1\wedge \dots \wedge C_m)$, we deal with each clause $C_j$ for $j \in \{1,\dots, m\}$ as follows. Let $\alpha$ denote the first-order part of $C_j$.
	\begin{description}
		\item[Case\;1] If $C_j= \alpha \vee \neg X_i \bar{x}_1 \vee \exists \bar{x}_2 X_i \bar{x}_2$, then remove the clause $C_j$.
		\item[Case\;2] If  $C_j= \alpha \vee X_i \bar{x}_1 \vee \neg X_i \bar{x}_2$, then replace $C_j$ by $\alpha \vee \bar{x}_1 = \bar{x}_2$.
		\item[Case\;3] For the other cases, remove all occurrences of second-order variables in $C_j$.
	\end{description}
	After the above steps, all second-order variables are removed and we obtain a first-order formula $\phi=\forall \bar{x}(C_1'\wedge \dots \wedge C_{m'}')$, where each $C_j'$ is quantifier-free. 
	
	In \textbf{Case\;1}, $\neg X_i \bar{x}_1 \vee \exists \bar{x}_2 X_i \bar{x}_2$ is a tautology, so the clause can be removed safely. 
	In \textbf{Case\;2}, if $\bar{x}_1 = \bar{x}_2$, then $X_i \bar{x}_1 \vee \neg X_i \bar{x}_2$ is always true; if $\bar{x}_1 \neq \bar{x}_2$, then there exists a valuation for $X_i$ such that $X_i \bar{x}_1 \vee \neg X_i \bar{x}_2$ is false, the clause is true iff $\alpha$ is true. Hence, $\forall X_i \forall\bar{x}(\alpha \vee X_i \bar{x}_1 \vee \neg X_i \bar{x}_2)$ is equivalent to $\forall\bar{x}(\alpha \vee \bar{x}_1 = \bar{x}_2)$. In \textbf{Case\;3}, there is always a valuation for the second-order variables in $C_j$ under which the second-order part of $C_j$ is false. So all occurrences of second-order variables can be removed from $C_j$.
	Therefore, $\Phi$ and $\phi$ are equivalent.
\end{proof}

\begin{cor}\label{coro-sokequiv}
	For each $k\geq 1$, if $k$ is odd, then $\Sigma^1_k\text{-}\mathrm{KROM}^{r}\equiv \Sigma^1_{k+1}\textbf{-}\mathrm{KROM}^{r}$; and if $k$ is even, then $\Pi^1_k\text{-}\mathrm{KROM}^{r}\equiv \Pi^1_{k+1}\text{-}\mathrm{KROM}^{r}$.
\end{cor}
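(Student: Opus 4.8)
The plan is to establish, for odd $k$, the nontrivial inclusion $\Sigma^1_{k+1}\text{-}\mathrm{KROM}^{r}\leq\Sigma^1_k\text{-}\mathrm{KROM}^{r}$, and for even $k$, the nontrivial inclusion $\Pi^1_{k+1}\text{-}\mathrm{KROM}^{r}\leq\Pi^1_k\text{-}\mathrm{KROM}^{r}$; the reverse inclusions are immediate, since a $\Sigma^1_k\text{-}\mathrm{KROM}^{r}$ (resp.\ $\Pi^1_k\text{-}\mathrm{KROM}^{r}$) formula becomes a $\Sigma^1_{k+1}\text{-}\mathrm{KROM}^{r}$ (resp.\ $\Pi^1_{k+1}\text{-}\mathrm{KROM}^{r}$) formula once one appends an innermost universal second-order block whose variables occur in no clause. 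The point making the nontrivial direction go through is a parity count: when $k$ is odd, $k+1$ is even, so the prefix of a $\Sigma^1_{k+1}$ formula ends with a universal block; when $k$ is even, $k+1$ is odd, so the prefix of a $\Pi^1_{k+1}$ formula also ends with a universal block. Hence in both cases the given formula may be written
\[ \Phi \;=\; Q_1\bar R_1\,\cdots\,Q_k\bar R_k\;\forall\bar R_{k+1}\;\forall\bar x\;(C_1\wedge\cdots\wedge C_m), \]
with $Q_1=\exists$ in the $\Sigma$-case and $Q_1=\forall$ in the $\Pi$-case, and it suffices to replace $\forall\bar R_{k+1}\;\forall\bar x\;(C_1\wedge\cdots\wedge C_m)$ by an equivalent $\forall\bar x\;(C_1'\wedge\cdots\wedge C_{m'}')$ whose clauses $C_j'$ are $\mathrm{KROM}^{r}$ clauses with respect to $\bar R_1,\dots,\bar R_k$ only.

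The second step is to carry out this replacement exactly as in the proof of Proposition~\ref{prop-Pikroeqfirt}, but applied only to the heads that mention a variable of $\bar R_{k+1}$. In a clause $C_j=\alpha_j\vee H_1\vee H_2$, with $\alpha_j$ its disjunction of $\tau$-literals, split $H_1,H_2$ into the part $\pi_j$ that mentions a variable of $\bar R_{k+1}$ and the remainder $\delta_j$ (heads over $\bar R_1,\dots,\bar R_k$, and occurrences of $\bot$). If $\pi_j$ is a tautology --- which, since each head is $X\bar z$, $\neg X\bar z$, $\exists\bar z\,X\bar z$ or $\bot$, can only happen when $\pi_j$ has the shape $\neg X\bar z_1\vee\exists\bar z_2\,X\bar z_2$ --- delete $C_j$; if $\pi_j=X\bar z_1\vee\neg X\bar z_2$ (so $\delta_j$ is empty), replace $C_j$ by $\alpha_j\vee\bar z_1=\bar z_2$; otherwise delete $\pi_j$ and keep $C_j'=\alpha_j\vee\delta_j$. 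One checks that every resulting $C_j'$ is again a $\mathrm{KROM}^{r}$ clause with respect to $\bar R_1,\dots,\bar R_k$: its heads are at most the (at most two) heads of $\delta_j$, all over $\bar R_1,\dots,\bar R_k$ or $\bot$, and its first-order part is still a disjunction of $\tau$-literals because ``$=$'' belongs to $\tau$. Thus $\Phi'=Q_1\bar R_1\,\cdots\,Q_k\bar R_k\;\forall\bar x\;(C_1'\wedge\cdots\wedge C_{m'}')$ is a $\Sigma^1_k\text{-}\mathrm{KROM}^{r}$ (resp.\ $\Pi^1_k\text{-}\mathrm{KROM}^{r}$) formula.

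It then remains to verify $\Phi\equiv\Phi'$, i.e.\ that $\forall\bar R_{k+1}\,\forall\bar x\,\bigwedge_j C_j$ and $\forall\bar x\,\bigwedge_j C_j'$ agree under every interpretation of $\bar R_1,\dots,\bar R_k$; fix such an interpretation and a tuple $\bar x=\bar a$. For ``$\Leftarrow$'': a deleted clause is a tautology in $\bar R_{k+1}$; a Case-2 clause $\alpha_j\vee\bar z_1=\bar z_2$ holds at $\bar a$ either because $\bar a_1=\bar a_2$ (in which case $X\bar a_1\vee\neg X\bar a_2$ is a tautology) or because $\alpha_j$ is a disjunct of $C_j$; and in the last case $C_j'$ is a sub-disjunction of $C_j$. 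For ``$\Rightarrow$'': for each non-deleted clause $C_j$ choose a valuation of $\bar R_{k+1}$ that makes $\pi_j$ false at $\bar a$ --- in the ``otherwise'' case this is precisely the computation of Proposition~\ref{prop-Pikroeqfirt} (set each relevant relation to $\emptyset$, or to the full relation, according to the sign of the head), and in the Case-2 case it is possible whenever $\bar a_1\neq\bar a_2$ --- and since $\delta_j$ does not mention $\bar R_{k+1}$, the truth of $C_j(\bar a)$ under this valuation forces $\alpha_j\vee\delta_j$ (resp.\ $\alpha_j\vee\bar a_1=\bar a_2$), which is $C_j'(\bar a)$.

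I expect the only genuinely delicate point to be this last one: one cannot in general falsify all the $\pi_j$ with a single valuation (for instance if one clause contains $X\bar z$ and another $\neg X\bar z$), so the ``$\Rightarrow$'' argument must proceed clause by clause and exploit that the quantifier binding $\bar R_{k+1}$ is universal, which is what legitimizes using a different valuation for each $j$. Everything else is the bookkeeping already done in Proposition~\ref{prop-Pikroeqfirt}, together with the observation that heads over the outer blocks $\bar R_1,\dots,\bar R_k$ and occurrences of $\bot$ are inert under the elimination of $\bar R_{k+1}$ and are simply carried along.
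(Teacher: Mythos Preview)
Your proposal is correct and follows essentially the same approach as the paper: both arguments eliminate the innermost universal second-order block by the clause-by-clause case analysis of Proposition~\ref{prop-Pikroeqfirt}. The paper's proof is a one-line invocation of that proposition (treating the outer relations $\bar R_1,\dots,\bar R_k$ as part of the vocabulary), whereas you explicitly redo the three cases while carrying the heads $\delta_j$ over the outer blocks along as inert disjuncts; your explicit version has the minor advantage of making it transparent that the resulting clauses are still $\mathrm{KROM}^{r}$ with respect to $\bar R_1,\dots,\bar R_k$, and your remark that the ``$\Rightarrow$'' direction must pick a different valuation of $\bar R_{k+1}$ for each clause is exactly the point implicit in the paper's Case~3.
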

\begin{proof}
	If the type of the innermost second-order quantifier block of a SO-KROM$^{r}$ formula is universal, then it can be removed by Proposition~\ref{prop-Pikroeqfirt} to get an equivalent formula.
\end{proof}

We use $\varphi[\alpha/\beta]$ to denote replacing the variable (or the formula) $\alpha$ in $\varphi$ with $\beta$.

\begin{lem}\label{lem-exists}
	Let $\exists x_1\dots \exists x_n \phi$ be a quantified Boolean formula. It is equivalent to the following formula
	\[\phi[x_1/\bot,\dots, x_n/\bot]\vee \bigvee_{1\leq i\leq n}\exists x_1\dots \exists x_{i-1}\exists x_{i+1}\dots\exists x_n\phi[x_i/\top].\]
\end{lem}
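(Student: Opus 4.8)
The plan is to prove this by a straightforward case distinction on the truth value that $x_i$ takes in a satisfying assignment, using a telescoping/chaining argument over the indices $1,\dots,n$. First I would establish the ``$\Leftarrow$'' direction, which is essentially trivial: the disjunct $\phi[x_1/\bot,\dots,x_n/\bot]$ is just $\phi$ under the all-false assignment to $x_1,\dots,x_n$, hence implies $\exists x_1\dots\exists x_n\,\phi$; and each disjunct $\exists x_1\dots\exists x_{i-1}\exists x_{i+1}\dots\exists x_n\,\phi[x_i/\top]$ clearly implies $\exists x_1\dots\exists x_n\,\phi$ by reinstating the quantifier $\exists x_i$ and witnessing it with $\top$.

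For the ``$\Rightarrow$'' direction, suppose $\exists x_1\dots\exists x_n\,\phi$ holds, so there is an assignment $v$ to $x_1,\dots,x_n$ under which $\phi$ is true. I would split into two cases. If $v(x_i)=\bot$ for every $i$, then $\phi$ is true under the all-false assignment, so $\phi[x_1/\bot,\dots,x_n/\bot]$ holds, giving the first disjunct. Otherwise, let $i$ be any index with $v(x_i)=\top$; then $\phi[x_i/\top]$ is satisfied by the restriction of $v$ to the remaining variables $x_1,\dots,x_{i-1},x_{i+1},\dots,x_n$, so $\exists x_1\dots\exists x_{i-1}\exists x_{i+1}\dots\exists x_n\,\phi[x_i/\top]$ holds, giving the $i$-th disjunct.

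Since the two directions cover all cases and the right-hand side is exactly the disjunction of these alternatives, the two formulas are equivalent. I do not expect a genuine obstacle here; the only point requiring a little care is bookkeeping the substitution notation $\phi[\alpha/\beta]$ consistently (simultaneous substitution of all variables by $\bot$ in the first disjunct versus single substitution $x_i/\top$ in the others) and making sure the argument does not implicitly assume anything about the structure of $\phi$ — it works for an arbitrary quantifier-free Boolean $\phi$, and indeed equally well if $\phi$ contains further quantifiers over other variables, since those are untouched by the manipulation.
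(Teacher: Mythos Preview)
Your proof is correct and follows essentially the same approach as the paper: a case distinction on whether a satisfying assignment sets all $x_i$ to $\bot$ or sets some $x_i$ to $\top$. The paper's own proof is a one-sentence version of exactly this argument, so your more detailed write-up simply spells out both implications explicitly.
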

\begin{proof}
	$\exists x_1\dots \exists x_n \phi$ is true iff $\phi$ is true when all $x_1,\dots,x_n$ are false, or for some $x_i$, where $(1\leq i \leq n)$, the formula 
	$\exists x_1\dots \exists x_{i-1}\exists x_{i+1}\dots\exists x_n\phi$ is true when $x_i$ is true.
\end{proof}

From Lemma~\ref{lem-exists} we can infer the following proposition. 
\begin{prop}\label{prop-sig2krom}
	Every $\Sigma^1_1\text{-}\mathrm{KROM}^r$ formula is equivalent to a formula of the form $\exists \bar{y}_1 \phi_1\vee \dots \vee \exists \bar{y}_n \phi_n$,
	where each $\phi_i$ for $i\in \{1,\dots,n\}$ is a $\Sigma^1_1\text{-}\mathrm{KROM}$ formula.
\end{prop}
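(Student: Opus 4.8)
The plan is to eliminate the new disjuncts $\exists\bar z\,R\bar z$ by a case split on whether the relations interpreting the relevant second-order variables are empty, turning a $\Sigma^1_1$-KROM$^r$ formula into a finite disjunction of $\Sigma^1_1$-KROM formulas, each guarded by a block of first-order existential quantifiers.

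Write the given formula as $\Phi=\exists R_1\cdots\exists R_m\,\forall\bar x\,(C_1\wedge\cdots\wedge C_n)$, and let $I\subseteq\{1,\dots,m\}$ be the set of indices $i$ such that $e_i:=\exists z_1\cdots\exists z_{arity(R_i)}\,R_i\,z_1\cdots z_{arity(R_i)}$ occurs in some clause. Once the $R_i$'s are fixed in a structure, each $e_i$ has a determined truth value (it is true exactly when $R_i\neq\emptyset$), so the $e_i$ may be treated as Boolean atoms. Introducing fresh propositional variables for them and applying Lemma~\ref{lem-exists} once for each index in $I$, we obtain
\[
\Phi\ \equiv\ \bigvee_{\sigma\colon I\to\{\top,\bot\}}\Phi_\sigma ,
\]
where $\Phi_\sigma$ says that there are $R_1,\dots,R_m$ satisfying $\forall\bar x\,(C_1\wedge\cdots\wedge C_n)$ and making $e_i$ true precisely for the indices $i$ with $\sigma(i)=\top$.

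It then remains to rewrite each $\Phi_\sigma$ into the required shape, starting from the matrix of $\Phi$. For $i\in I$ with $\sigma(i)=\bot$ I would delete $\exists R_i$ and substitute $e_i\mapsto\bot$, $R_i\bar z\mapsto\bot$, $\neg R_i\bar z\mapsto\top$ throughout; each clause then either remains a Krom clause or evaluates to $\top$ and is dropped. For $i\in I$ with $\sigma(i)=\top$ I would substitute $e_i\mapsto\top$ (deleting every clause mentioning $e_i$), add a unit clause $R_i\bar y_i$ for a fresh tuple $\bar y_i$ of first-order variables of length $arity(R_i)$, and prepend $\exists\bar y_i$ at the front of the formula. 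Collecting the fresh first-order variables into a (possibly empty) tuple $\bar y_\sigma$ and the surviving second-order variables into $\bar R'$, the result has the form $\exists\bar y_\sigma\,\bigl(\exists\bar R'\,\forall\bar x\,\bigwedge_j C_j'\bigr)$, whose parenthesised part $\phi_\sigma$ is an honest $\Sigma^1_1$-KROM formula (with free variables $\bar y_\sigma$); hence $\Phi\equiv\bigvee_\sigma\exists\bar y_\sigma\,\phi_\sigma$ is of the asserted form. The equivalence $\Phi\equiv\bigvee_\sigma\Phi_\sigma$ can also be checked directly: from a witness $R_1,\dots,R_m$ of $\Phi$ one sets $\sigma(i)=\top$ iff $R_i\neq\emptyset$ and reuses these relations (taking $\bar y_i$ to be any tuple of $R_i$) for $\Phi_\sigma$; conversely any witness of some $\Phi_\sigma$ satisfies the original clauses, because the added clause $R_i\bar y_i$ forces $R_i\neq\emptyset$ when $\sigma(i)=\top$ while the substitution $R_i\mapsto\emptyset$ takes care of $\sigma(i)=\bot$.

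\textbf{The main obstacle} I expect is the syntactic bookkeeping in the rewriting step: one must check that the substitutions never produce a clause with three second-order literals (true because they only erase literals or whole clauses, and the two new clause types are units), that replacing $e_i$, $R_i\bar z$, $\neg R_i\bar z$ by the constants $\bot$, $\top$ leaves every surviving clause in Krom form, and that the fresh first-order existential quantifiers can legitimately be hoisted to the very front so that each disjunct fits the template $\exists\bar y_i\,\phi_i$ with $\phi_i$ a $\Sigma^1_1$-KROM formula. One should also be careful that the case split is genuinely on the truth values of the $e_i$'s, since it is exactly this that licenses assuming the corresponding relation empty in the branches with $\sigma(i)=\bot$.
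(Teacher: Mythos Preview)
Your proposal is correct and follows essentially the same idea as the paper: case-split on whether each second-order relation is empty, and in the nonempty branches introduce fresh first-order witnesses $\bar y_i$ to certify $R_i\neq\emptyset$, which renders every $\exists\bar z\,R_i\bar z$ true and lets the offending clauses be dropped. The paper processes one relation at a time and, in the nonempty branch, reparametrises via the substitution $R\bar z\mapsto(\bar z=\bar y)\vee R\bar z$, whereas you handle all relations simultaneously with the big disjunction over $\sigma$ and enforce nonemptiness by adding the unit clause $R_i\bar y_i$; your device is a bit more transparent since it leaves the ordinary $R_i$-literals untouched, while the paper's substitution (whose effect on plain $R\bar z$ and $\neg R\bar z$ is not spelled out there) requires an extra clause-splitting step to stay in Krom form. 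Your appeal to Lemma~\ref{lem-exists} is somewhat loose---the $e_i$ are not independent Boolean variables but determined by the $R_i$---yet the direct verification you give afterwards is exactly what is needed and suffices.
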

\begin{proof}
	Let	$\Psi=\exists R \exists \ols{Y} \forall \bar{x} \phi$ be a $\Sigma^1_1$-KROM$^r$ formula, and $\alpha(\bar{z})=(\bar{z}=\bar{y})\vee R\bar{z}$ where $\bar{y}$ have no occurrence in $\phi$.
	It is easily seen that if $\bar{z}=\bar{y}$ holds then $\alpha(\bar{z})$ is true, and if $\bar{z}\neq \bar{y}$ holds then $\alpha(\bar{z})$ is equivalent to $R\bar{z}$. So $\alpha(\bar{z})$ is equivalent to $R\bar{z}$ except at the point $\bar{y}$. 
	Define
	\[
	\Psi'= \exists \ols{Y} \forall \bar{x}\phi[R\bar{z}/\bot]\vee \exists\bar{y} \exists R \exists \ols{Y} \forall \bar{x} \phi[R\bar{z}/\alpha(\bar{z})].
	\]
	We show that $\Psi$ and $\Psi'$ are equivalent. It is easily seen that for any structure $\mathcal{A}$, $\mathcal{A}\models \Psi$ iff $(\mathcal{A},R)\models \exists \ols{Y} \forall \bar{x} \phi$, where either  $R=\emptyset$ or $R$ is not empty.
	Every occurrence of $\exists \bar{z}R\bar{z}$ in $\Psi'$ is either replaced by $\exists \bar{z} \bot$ or replaced by $\exists\bar{z}\alpha(\bar{z})$ which is a tautology. We remove the occurrences of $\exists \bar{z} \bot$ and the clauses containing $\exists\bar{z}\alpha(\bar{z})$ in $\Psi'$.
	For any structure $\mathcal{A}$, we can construct a quantified Boolean formula $\Psi_{\mathcal{A}}$ such that $\mathcal{A}\models \Psi$ iff $\Psi_{\mathcal{A}}$ is true (see the proof of Proposition~\ref{prop-exprness} for details of the construction). Similarly, we can construct $\Psi'_{\mathcal{A}}$ such that $\mathcal{A}\models \Psi'$ iff $\Psi'_{\mathcal{A}}$ is true. By Lemma~\ref{lem-exists}, $\Psi_{\mathcal{A}}$ and $\Psi'_{\mathcal{A}}$ are equivalent. Therefore, $\Psi$ and $\Psi'$ are equivalent. The same procedure can be repeated for each $Y_i \in \ols{Y}$ until all occurrences of $\exists \bar{v}Y_i\bar{v}$ are removed. Finally, we can obtain an equivalent formula of the form $\exists \bar{y}_1 \phi_1\vee \dots \vee \exists \bar{y}_n \phi_n$, where each $\phi_i$ $(1\leq i \leq n)$ is a $\Sigma^1_1$-KROM formula.
\end{proof}

\begin{prop}\label{prop-sg1kromNL}
	The data complexity of $\Sigma^1_1\text{-}\mathrm{KROM}^r$ is in $\mathrm{NL}$.
\end{prop}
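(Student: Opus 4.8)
The plan is to reduce the claim to the data complexity of $\Sigma^1_1$-KROM via Proposition~\ref{prop-sig2krom}, and then exploit that $\mathrm{NL}$ is closed under finite unions, under existential quantification over the elements of the input structure, and (being equal to $\mathrm{co\text{-}NL}$) under complementation. By Proposition~\ref{prop-sig2krom}, any $\Sigma^1_1$-KROM$^r$ formula $\Phi$ is equivalent to $\exists \bar{y}_1\phi_1\vee\cdots\vee\exists\bar{y}_n\phi_n$ with each $\phi_i$ in $\Sigma^1_1$-KROM. On a structure $\mathcal{A}$ with domain of size $n_0$, an $\mathrm{NL}$ machine first nondeterministically picks an index $i$ (which costs $O(1)$ space, as $n$ is a fixed constant) and a tuple $\bar{a}\in A^{|\bar{y}_i|}$ (which costs $O(|\bar y_i|\log n_0)=O(\log n_0)$ space), then verifies $(\mathcal{A},\bar{a}/\bar{y}_i)\models\phi_i$, and accepts iff some branch does. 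So it suffices to show that, for a fixed $\Sigma^1_1$-KROM formula $\phi$ (possibly with free first-order variables, interpreted by the guessed constants), deciding $\mathcal{A}\models\phi$ lies in $\mathrm{NL}$.

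For this, write $\phi=\exists R_1\cdots\exists R_m\,\forall\bar{x}\,(C_1\wedge\cdots\wedge C_\ell)$. Instantiating the prefix $\forall\bar x$ over all tuples of $A$ and evaluating every first-order literal in $\mathcal{A}$ reduces ``$\mathcal{A}\models\phi$'' to the satisfiability of a propositional formula $\Theta_{\mathcal{A}}$ whose variables are the atoms $R_i\bar c$ for $1\le i\le m$ and $\bar c\in A^{arity(R_i)}$, of which there are polynomially many in $n_0$. Crucially, a $\Sigma^1_1$-KROM formula contains no subformula of the form $\exists\bar z\,R\bar z$ (these were removed in passing to the $\phi_i$), so each clause $C_j$ has at most two second-order literals; after deleting the clauses whose first-order part evaluates to true and deleting the $\bot$-disjuncts, every remaining clause of $\Theta_{\mathcal{A}}$ has at most two literals. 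Thus $\Theta_{\mathcal{A}}$ is a $2$-CNF formula: if it has an empty clause then $\phi$ is false in $\mathcal{A}$, and otherwise $\mathcal{A}\models\phi$ iff the $2$-CNF formula $\Theta_{\mathcal{A}}$ is satisfiable.

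It remains to solve $\Theta_{\mathcal{A}}$ in $\mathrm{NL}$ without materialising it. Its implication graph has the literals $\pm R_i\bar c$ as vertices (each codable in $O(\log n_0)$ bits), with arcs $\neg u\to v,\ \neg v\to u$ for each binary clause $u\vee v$ and $\neg u\to u$ for each unit clause $u$; since $\phi$ is fixed, the presence of an arc is decidable in logarithmic space directly from $\mathcal{A}$ by instantiating the relevant clause $C_j$ at the relevant tuple. The instance $\Theta_{\mathcal{A}}$ is unsatisfiable iff some variable $R_i\bar c$ satisfies both $R_i\bar c\rightarrow^{*}\neg R_i\bar c$ and $\neg R_i\bar c\rightarrow^{*}R_i\bar c$ in this graph. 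Reachability in an implicitly presented graph with logspace-decidable arcs is in $\mathrm{NL}$, and hence so is non-reachability by the Immerman--Szelepcs\'enyi theorem; since $\mathrm{NL}$ is closed under intersection, union, and quantification over (polynomially many) tuples of $A$, the predicate ``$\Theta_{\mathcal{A}}$ is unsatisfiable'' is in $\mathrm{NL}$, and so is its complement. Assembling the three steps yields a single $\mathrm{NL}$ algorithm deciding $\mathcal{A}\models\Phi$.

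The step I expect to need the most care is the space bookkeeping: one must never store $\Theta_{\mathcal{A}}$ or its implication graph, only access them through logspace subroutines, and one must check that each invoked closure property of $\mathrm{NL}$ (finite union, existential projection over structure elements, and complementation via $\mathrm{NL}=\mathrm{co\text{-}NL}$) indeed applies in the implicit-presentation setting. Alternatively one could shorten the argument by citing Gr\"adel's result that the data complexity of $\Sigma^1_1$-KROM is already in $\mathrm{NL}$~\cite{gradel1992}, leaving only the routine reduction of the first paragraph.
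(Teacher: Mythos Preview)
Your proposal is correct and follows essentially the same approach as the paper: reduce via Proposition~\ref{prop-sig2krom} to a finite disjunction $\exists\bar{y}_1\phi_1\vee\cdots\vee\exists\bar{y}_n\phi_n$ with each $\phi_i$ in $\Sigma^1_1$-KROM, nondeterministically guess $i$ and a tuple for $\bar{y}_i$ in logspace, and then check $\phi_i$. The only difference is that the paper takes exactly the shortcut you mention in your final sentence---it simply cites Gr\"adel~\cite{gradel1992} for the fact that the data complexity of $\Sigma^1_1$-KROM is in $\mathrm{NL}$---whereas you additionally unpack that result via grounding to 2-CNF and the implication-graph/reachability characterisation; your longer version is self-contained but not required.
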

\begin{proof}
	By Proposition~\ref{prop-sig2krom}, we only need to show that the data complexity of the formula $\exists \bar{y}_1 \phi_1\vee \dots \vee \exists \bar{y}_n \phi_n$, where each $\phi_i$ $(1\leq i \leq n)$ is a $\Sigma^1_1$-KROM formula, is in NL.  Given a structure $\mathcal{A}$, the Turing machine can nondeterministically choose an $i\in \{1,\dots, n\}$ and a tuple $\bar{u}_i\in A^{|\bar{y}_i|}$ in logarithmic space. Whether $\mathcal{A}\models \phi_i[\bar{u}_i]$ holds can be checked in NL since the data complexity of $\Sigma^1_1$-KROM is in NL~\cite{gradel1992}.
\end{proof}

Every $\Sigma^1_1$-KROM formula is also a  $\Sigma^1_1$-KROM$^r$ formula. Because $\Sigma^1_1$-KROM captures NL on ordered finite structures~\cite{gradel1992}, combining Corollary~\ref{coro-sokequiv} we obtain the following corollary.

\begin{cor}\label{coro-sigcapnl}
	Both $\Sigma^1_1\text{-}\mathrm{KROM}^r$ and $\Sigma^1_2\text{-}\mathrm{KROM}^r$ capture $\mathrm{NL}$ on ordered finite structures.
\end{cor}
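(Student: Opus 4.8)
The plan is to check the two requirements in the definition of capturing — that the data complexity of the logic lies in NL, and that every NL-decidable class of ordered finite structures is definable — and to transfer each requirement between $\Sigma^1_1$-KROM$^r$ and $\Sigma^1_2$-KROM$^r$ using the equivalence of Corollary~\ref{coro-sokequiv}. Since $1$ is odd, that corollary gives $\Sigma^1_1\text{-}\mathrm{KROM}^r \equiv \Sigma^1_2\text{-}\mathrm{KROM}^r$, so it suffices to establish both requirements for $\Sigma^1_1$-KROM$^r$.

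For the data-complexity requirement, Proposition~\ref{prop-sg1kromNL} already shows that the set of models of every $\Sigma^1_1$-KROM$^r$ formula is decidable in NL; by the equivalence, the same holds for every $\Sigma^1_2$-KROM$^r$ formula. For the definability requirement, I would invoke Gr\"{a}del's theorem that $\Sigma^1_1$-KROM captures NL on ordered finite structures~\cite{gradel1992}: given any NL-decidable class $\mathcal{K}$ of ordered $\tau$-structures, there is a $\Sigma^1_1$-KROM formula $\varphi$ defining $\mathcal{K}$. By clause (2$'$) of the definition every Krom clause is in particular a revised Krom clause, so $\varphi$ is literally a $\Sigma^1_1$-KROM$^r$ formula, and by Corollary~\ref{coro-sokequiv} it is also equivalent to a $\Sigma^1_2$-KROM$^r$ formula. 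Hence $\mathcal{K}$ is definable in both logics.

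There is no genuine obstacle at this stage: the substantive work has already been carried out in Proposition~\ref{prop-sig2krom} and Proposition~\ref{prop-sg1kromNL}, which rewrite an arbitrary $\Sigma^1_1$-KROM$^r$ formula as a disjunction $\exists \bar{y}_1 \phi_1 \vee \cdots \vee \exists \bar{y}_n \phi_n$ of $\Sigma^1_1$-KROM formulas guarded by first-order existential blocks, and then run the NL algorithm by guessing the disjunct and the witness tuple in logarithmic space. The only thing to verify carefully is that Gr\"{a}del's capturing result applies to exactly the syntactic form of $\Sigma^1_1$-KROM fixed in the preliminaries, so that the inclusion $\Sigma^1_1\text{-KROM} \subseteq \Sigma^1_1\text{-KROM}^r$ is immediate; this is the case, and the corollary follows.
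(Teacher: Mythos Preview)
Your proposal is correct and follows essentially the same approach as the paper: invoke Proposition~\ref{prop-sg1kromNL} for the data-complexity bound, Gr\"{a}del's capturing theorem together with the syntactic inclusion $\Sigma^1_1\text{-KROM} \subseteq \Sigma^1_1\text{-KROM}^r$ for definability, and Corollary~\ref{coro-sokequiv} to transfer both facts to $\Sigma^1_2\text{-KROM}^r$. The paper states the corollary without a written proof, relying on exactly the ingredients you cite.
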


\section{The descriptive complexity of SO-KROM$^{r}$}\label{sec-expre-sokr}
SO-KROM collapses to its existential fragment. This is unlikely to be true for SO-KROM$^{r}$ by the following result. Let $\Sigma_k$-CNF (resp., $\Sigma_k$-DNF) denote the set of quantified Boolean formulas $\exists\bar{x}_1\forall\bar{x}_2\exists\bar{x}_3\dots Q_k\bar{x}_k\phi$ whose prefix starts with an existential quantifier and has $k-1$ alternations between series of existential and universal quantifiers, and the matrix $\phi$ is a quantifier-free formula in conjunctive normal form (resp., disjunctive normal form). The definitions for $\Pi_k$-CNF and $\Pi_k$-DNF are similar where the formula's prefix starts with a universal quantifier.
Given a set $\mathbf{F}$ of quantified Boolean formulas, the evaluation problem of $\mathbf{F}$ is deciding the truth value of the formulas in it. For the polynomial hierarchy, it is shown that the evaluation problem of $\Sigma_k$-CNF (resp., $\Sigma_k$-DNF) is $\Sigma^p_k$-complete if $k$ is odd (resp., even)~\cite{STOCKMEYER1976}. Hence, the evaluation problem of $\Pi_k$-DNF (resp., $\Pi_k$-CNF) is $\Pi^p_k$-complete if $k$ is odd (resp., even) by duality.

\begin{prop}\label{prop-phINkrom}
	The evaluation problem of $\Sigma_k\text{-}\mathrm{DNF}$ is definable in $\Sigma_{k+1}^{1}\text{-}\mathrm{KROM}^{r}$ if $k$ is even, and the evaluation problem of $\Pi_k\text{-}\mathrm{DNF}$ is definable in $\Pi_{k+1}^{1}\text{-}\mathrm{KROM}^{r}$ if $k$ is odd.
\end{prop}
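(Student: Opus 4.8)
The plan is to encode, for a given $\Sigma_k$-DNF (resp. $\Pi_k$-DNF) instance — a quantified Boolean formula with a fixed quantifier prefix over variables grouped into blocks, and a quantifier-free DNF matrix — its truth value as the satisfiability of an SO-KROM$^r$ sentence over a structure $\mathcal{A}$ that represents the formula. I would first fix a vocabulary recording the syntactic data: which variables belong to which quantifier block, and which variables (with which polarity) occur in which term (conjunctive clause) of the DNF matrix. A truth assignment to the Boolean variables is then a unary second-order relation $T$ on the domain, where $Tx$ means ``variable $x$ is set to true''. The key representational trick is that the DNF matrix is true under an assignment iff there \emph{exists} some term all of whose literals are satisfied; a term with positive literals $x_1,\dots,x_p$ and negative literals $y_1,\dots,y_q$ is satisfied iff $Tx_1,\dots,Tx_p$ all hold and $Ty_1,\dots,Ty_q$ all fail. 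The obstacle to a naive encoding is that ``all literals of the chosen term hold'' is a conjunction over literals, which does not directly fit the Krom (2-literal-per-clause) shape; this is exactly where the revised feature $\exists\bar z\,R\bar z$ must be exploited, mirroring the strong-connectivity example in the excerpt.

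The concrete construction I would carry out: quantify existentially an auxiliary unary relation $S$ intended to ``select'' a witnessing term, and use clauses in the Krom format to propagate consistency. More precisely, I would write Krom clauses expressing (i) $S$ picks at least one term — encoded via a clause of the form $\cdots\vee\exists \bar z\, S\bar z$ using the revised feature — and (ii) for every term $t$ and every literal $\ell$ occurring in $t$, the implication ``$St \rightarrow (\ell\text{ is satisfied under }T)$'', which unfolds to the two-literal Krom clauses $\neg St \vee Tx$ (for a positive literal on $x$) and $\neg St \vee \neg Ty$ (for a negative literal on $y$); the literal-occurrence facts themselves are first-order atoms $\beta_s$ over $\tau$, so each such clause has the permitted form $\beta\vee\neg St\vee H$ with $H$ a literal on $T$. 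The quantifier block structure of the QBF is then matched by the second-order prefix: the innermost existential block of Boolean variables, together with the auxiliary relations $R,S$ and the complement-relations needed to Kromize, is folded into the innermost $\exists$ of the SO-KROM$^r$ prefix, and the outer $\forall/\exists$ alternations of the QBF are copied as outer second-order quantifier blocks restricting the corresponding sub-relations of $T$. One must check that this costs exactly one extra alternation — hence $\Sigma_{k+1}^1$-KROM$^r$ for a $\Sigma_k$ prefix — because the added innermost existential layer ($R$, $S$, complements) is of the \emph{same} type as the innermost QBF block only when that block is existential, i.e. when $k$ is even for the $\Sigma$ case and $k$ is odd for the $\Pi$ case; this parity bookkeeping is precisely why the statement is phrased with the even/odd restriction, and I would verify it block by block. (When the innermost QBF block is universal one cannot absorb the existential auxiliaries, which is consistent with Corollary~\ref{coro-sokequiv}.)

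I expect the main obstacle to be the faithful handling of the free Boolean variables that belong to the \emph{outer} quantifier blocks: the assignment relation $T$ must be split so that each quantifier block of the QBF corresponds to the portion of $T$ on the elements naming that block's variables, and the second-order quantifier of the matching type in the SO-KROM$^r$ prefix must quantify exactly that portion while the other portions stay fixed. Technically I would introduce, for each block $B_i$, a separate relation $T_i$ with clauses $\neg T_i x \vee \mathrm{Block}_i x$ forcing $T_i$ to live inside $B_i$ (a legal Krom clause, the first literal being a first-order atom), and then have the DNF-satisfaction clauses refer to the appropriate $T_i$ depending on which block the occurring variable lies in — this is purely a syntactic lookup in $\tau$. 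The remaining verification is routine: show that for every structure $\mathcal{A}$ coding a $\Sigma_k$-DNF formula $\Phi$, the constructed sentence is satisfied by $\mathcal{A}$ iff $\Phi$ evaluates to true, by induction on the quantifier prefix, the base case being the DNF-matrix equivalence established above and the induction step being the observation that a block of the SO-KROM$^r$ prefix quantifying $T_i$ simulates the corresponding block of $\Phi$; then dualize for the $\Pi_k$-DNF case.
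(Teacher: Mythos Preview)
Your construction is essentially the paper's: the paper fixes the vocabulary $\tau=\{\mathrm{Clause},\mathrm{Var}_1,\dots,\mathrm{Var}_k,\mathrm{Pos},\mathrm{Neg}\}$, uses per-block unary relations $X_1,\dots,X_k$ for the assignments (your $T_i$), an existentially quantified selector $Y$ (your $S$) with the revised clause $\exists z\,Yz$, and the Krom clauses $Yx\wedge\mathrm{Pos}\,xy\wedge\mathrm{Var}_h\,y\rightarrow X_h y$ and $Yx\wedge\mathrm{Neg}\,xy\wedge\mathrm{Var}_h\,y\rightarrow\neg X_h y$, which are exactly your $\neg St\vee T_i x$ / $\neg St\vee\neg T_i y$ clauses dressed with the occurrence predicates.

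Two small corrections. First, the auxiliary $R$ and the ``complement-relations needed to Kromize'' you allude to are not needed: each implication already has at most two second-order literals, so nothing further has to be Kromized. Second, your parity bookkeeping is stated backwards. For $\Sigma_k$ with $k$ even the prefix is $\exists\forall\cdots\exists\forall$, so the \emph{innermost} QBF block is \emph{universal}; the added existential $\exists Y$ therefore does \emph{not} merge with it, which is precisely why you gain one alternation and land in $\Sigma^1_{k+1}$. (Your sentence ``that block is existential, i.e.\ when $k$ is even for the $\Sigma$ case'' has the cases swapped.) With that fix, the argument is correct and coincides with the paper's.
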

\begin{proof}
	We only prove for $\Sigma_k$-DNF where $k$ is even, the proof for $\Pi_k$-DNF where $k$ is odd is the same as it. Let vocabulary $\tau=\{\mathrm{Clause},\mathrm{Var}_{1},\dots,\mathrm{Var}_{k},\mathrm{Pos},\mathrm{Neg}\}$, where $\mathrm{Clause}$, $\mathrm{Var}_1,\dots, \mathrm{Var}_k$ are unary relation symbols, and $\mathrm{Pos},\mathrm{Neg}$ are binary relation symbols. Using a similar method as in~\cite{Immerman1998descrip}, we can encode a $\Sigma_k$-DNF formula $\exists\bar{x}_1\forall\bar{x}_2\cdots\exists\bar{x}_{k-1}\forall\bar{x}_{k}\phi$ via a $\tau$-structure $\mathcal{A}$ such that for any $i,j\in A$, $\mathrm{Clause}\,i$ holds iff $i$ is a clause, $\mathrm{Var}_h\,j$ holds iff $j$ is a variable occurring in the quantifier block $\exists(\forall)\bar{x}_h$ for $h\in \{1,\dots, k\}$, and $\mathrm{Pos}\,ij$ (resp., $\mathrm{Neg}\,ij$) holds iff variable $j$ occurs positively (resp., negatively) in clause $i$.
	For example, the $\Sigma_4$-DNF formula
	\[ 
	\exists x_1\forall x_2\exists x_3\forall x_4(\underbrace{(x_1\wedge \neg x_2)}_1\vee \underbrace{(x_2\wedge \neg x_4)}_2\vee \underbrace{(x_3\wedge x_4)}_3) 
	\]
	can be encoded via the structure $\langle \{1,2,3,4\},\mathrm{Clause},\mathrm{Var}_1,\mathrm{Var}_2,\mathrm{Var}_3,\mathrm{Var}_4,\mathrm{Pos},\mathrm{Neg}\rangle$, where
	$\mathrm{Clause}=\{1,2,3\}$, $\mathrm{Var}_1=\{1\}$, $\mathrm{Var}_2=\{2\}$, $\mathrm{Var}_3=\{3\}$, $\mathrm{Var}_4=\{4\}$, $\mathrm{Neg}=\{(1,2),(2,4)\}$, $\mathrm{Pos}=\{(1,1),(2,2),(3,3),(3,4)\}$.
	Let $\Phi$ be the following formula 
	\[
	\exists X_1\forall X_2\dots\exists X_{k-1}\forall X_k\exists Y \forall x\forall y \left(
	\begin{array}{l}
		\exists z Yz\wedge (Yx\rightarrow \mathrm{Clause}\,x)\wedge\\
		\bigwedge_{1\leq h\leq k}(Yx\wedge \mathrm{Pos}\, xy\wedge \mathrm{Var}_{h}\,y\rightarrow X_{h}\,y)\wedge \\
		\bigwedge_{1\leq h\leq k}(Yx\wedge \mathrm{Neg}\,xy\wedge \mathrm{Var}_{h}\,y\rightarrow \neg X_{h}\,y)
	\end{array}\right).
	\]
	Obviously, $\Phi$ is a $\Sigma_{k+1}^{1}$-KROM$^{r}$ formula, and it expresses that there is a valuation $X_1$ to $\bar{x}_1$, for any valuation $X_2$ to $\bar{x}_2,\dots$, there is a valuation $X_{k-1}$ to $\bar{x}_{k-1}$, for any valuation $X_k$ to $\bar{x}_{k}$, there is a nonempty set $Y$ of clauses such that every literal in the clauses in $Y$ is true under the valuation.
	For an arbitrary $\Sigma_k$-DNF formula $\psi$, let $\mathcal{A}$ be the $\tau$-structure that encodes $\psi$, it is easily seen that $\mathcal{A}\models\Phi$ iff $\psi$ is true.
\end{proof}

Before showing that every second-order formula is equivalent to a SO-KROM$^{r}$ formula, we first prove a lemma.

\begin{lem}\label{lem_fistequiv}
	Every first-order formula is equivalent to a second-order formula
	$\exists Y\forall\bar{x}(\exists\bar{y}Y\bar{z}\bar{y}\wedge C_{1}\wedge\cdots\wedge C_{m})$
	where each $C_{i}$ is a disjunction of atomic or negated atomic formulas.
\end{lem}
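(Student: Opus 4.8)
The plan is to take an arbitrary first-order formula $\varphi$ and massage it, through a sequence of equivalence-preserving transformations, into the claimed normal form. The key observations that make this work are: (i) a first-order formula can be put into prenex form, and then (ii) the second-order existential quantifier $\exists Y$ together with the Skolem-style trick lets us absorb both the first-order quantifier alternation and the propositional matrix into a single Krom-like conjunction of clauses, while (iii) the distinguished conjunct $\exists\bar y\,Y\bar z\bar y$ is used to force $Y$ to be nonempty on the relevant slice, which is exactly what records the truth of the original formula. So the target shape is essentially: guess a relation $Y$ that ``witnesses'' a satisfying run of the formula, demand consistency via binary (Krom) clauses, and read off the answer from nonemptiness of $Y$.

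First I would bring $\varphi$ into prenex normal form $Q_1 u_1 \cdots Q_\ell u_\ell\, \psi(\bar u)$ with $\psi$ quantifier-free, and then into (say) conjunctive normal form in the matrix, so $\psi = \bigwedge_j D_j$ with each $D_j$ a clause. The difficulty is that ordinary CNF clauses $D_j$ are not Krom clauses: they may have arbitrarily many literals. The standard fix is a Tseitin-style introduction of fresh relation variables — but here we only get one second-order variable $Y$, so instead I would let $Y$ be a single relation of suitably large arity whose intended meaning is ``the tuple $(\bar z,\bar y)$ encodes a choice of values for the existentially quantified first-order variables (as functions of the universally quantified ones, i.e.\ Skolem witnesses) together with enough auxiliary bits to make every clause binary.'' Concretely, I would Skolemize the existential first-order quantifiers of $\varphi$ against a second-order witness folded into $Y$, and use additional coordinates of $Y$ to carry the intermediate truth values needed to break wide clauses into chains of two-literal clauses. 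The universally quantified $\forall\bar x$ in front ranges over all the relevant first-order points (the Skolem arguments and the clause/literal indices), and the conjunct $\exists\bar y\,Y\bar z\bar y$ asserts that for the appropriate ``root'' tuple $\bar z$ there is at least one completion — i.e.\ the witnessing structure is nonvacuous — which is what makes $\mathcal A\models\exists Y\forall\bar x(\dots)$ equivalent to $\mathcal A\models\varphi$.

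Alternatively, and more in the spirit of the machinery already set up in the paper, I would reduce to the bottom level of the analogous hierarchy: a first-order formula is in particular a $\Sigma^1_0$ formula, so by the preceding propositions (in particular the encoding technique used in Proposition~\ref{prop-phINkrom} and Lemma~\ref{lem-exists}) it suffices to exhibit the normal form for a quantifier-free CNF matrix after the first-order quantifiers have been handled, and then each wide clause $D_j = L_1\vee\cdots\vee L_r$ is replaced by the conjunction of the binary clauses $\neg W_0\vee L_1$, $\neg W_1\vee (W_0 \vee L_2)$ — no, rather the chain $W_{i-1}\vee L_i\vee \neg W_i$ — together with $W_0\vee\bot$ reformulated, where the $W_i$ are fresh coordinates packed into $Y$; the single conjunct $\exists\bar y\,Y\bar z\bar y$ then plays the role of asserting that the chain ``closes up'' correctly. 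I would verify the equivalence by the usual argument: from a model of $\varphi$ one builds $Y$ explicitly (the Skolem functions plus the forced auxiliary values), and conversely any $Y$ satisfying all the binary clauses, having the nonempty slice, back-propagates to a genuine satisfying assignment for $\varphi$.

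**The hard part** will be bookkeeping the arities and the exact clause shapes so that (a) every clause genuinely has at most two occurrences of $Y$-literals and the rest first-order (matching the Krom format later needed), and (b) the single distinguished conjunct $\exists\bar y\,Y\bar z\bar y$ — with the free first-order variables $\bar z$ of the original formula surviving in it — correctly captures ``$\varphi$ holds'' rather than something weaker; getting the direction from a model of the normal form back to a model of $\varphi$ requires that the nonemptiness conjunct, combined with the Krom consistency clauses, pins $Y$ down enough to extract honest witnesses. I expect the cleanest route is to do the Skolemization and the clause-splitting simultaneously, so that the auxiliary ``truth-value'' coordinates and the Skolem-witness coordinates live in the same relation $Y$, and then check the two implications directly on an arbitrary finite structure $\mathcal A$.
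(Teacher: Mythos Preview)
You have misread the target of the lemma. The clauses $C_i$ are required only to be disjunctions of atomic or negated atomic formulas; there is \emph{no} Krom restriction on them here (the Krom shape is obtained later, in Proposition~\ref{prop-exprness}, by dualizing and then using the interpretation into the DNF-evaluation structure). Consequently all of your Tseitin-style clause-splitting machinery is unnecessary, and in fact the specific plan you sketch---chains $W_{i-1}\vee L_i\vee\neg W_i$ with the $W_i$ packed into $Y$---would not produce binary clauses anyway, so even if Krom clauses were required this route would not reach them.

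The paper's proof is exactly the Skolemization idea you mention in passing, with nothing else layered on top. Write $\varphi$ in prenex CNF as $\forall\bar x_1\exists\bar y_1\cdots\forall\bar x_n\exists\bar y_n\bigwedge_j C_j$. Let $Y$ have arity $|\bar x_1|+\cdots+|\bar x_n|+|\bar y_1|+\cdots+|\bar y_n|$ and think of it as the graph of the tuple of Skolem functions. Then assert three things: (i) totality, $\forall\bar x_1\cdots\forall\bar x_n\exists\bar y_1\cdots\exists\bar y_n\,Y\bar x\bar y$, which after prenexing becomes the distinguished conjunct $\exists\bar y\,Y\bar z\bar y$ with $\bar z$ taken from the outer $\forall\bar x$; (ii) the dependency constraint, that whenever $Y\bar x\bar y$ and $Y\bar x'\bar y'$ both hold and $\bar x_1=\bar x_1',\ldots,\bar x_i=\bar x_i'$, then $\bar y_i=\bar y_i'$ (this is what forces $\bar y_i$ to depend only on $\bar x_1,\ldots,\bar x_i$, respecting the original quantifier alternation); and (iii) correctness, $Y\bar x\bar y\to\bigwedge_j C_j$. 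Each of (ii) and (iii) unfolds into a conjunction of ordinary clauses, and that is already the required form. Note in particular that the $\bar z$ in $\exists\bar y\,Y\bar z\bar y$ are variables drawn from the universally quantified block $\bar x$, not free variables of the original formula as you suggested. The one substantive piece you do not mention is the dependency clauses (ii); without them the construction collapses the quantifier prefix to $\forall\bar x\exists\bar y$ and is not equivalent to $\varphi$ in general.
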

\begin{proof}
	Given a first-order formula $\varphi$, without loss of generality, assume that $\varphi$ is in the prenex normal form $\forall\bar{x}_{1}\exists\bar{y}_{1}\cdots\forall\bar{x}_{n}\exists\bar{y}_{n}(C_{1}\wedge\cdots\wedge C_{m})$,
	where each $C_{i}$ for $i\in \{1,\dots,m\}$ is a disjunction of atomic or negated atomic formulas.
	Define
	\[
	\begin{array}{cl}
		\varphi_{1}= & \forall\bar{x}_{1}\cdots\forall\bar{x}_{n}\exists\bar{y}_{1}\cdots\exists\bar{y}_{n}Y\bar{x}_{1}\cdots\bar{x}_{n}\bar{y}_{1}\cdots\bar{y}_{n},\\
		\\
		\varphi_{2}= & \forall\bar{x}_{1}\cdots\forall\bar{x}_{n}\forall\bar{y}_{1}\cdots\forall\bar{y}_{n}  \forall\bar{x}_{1}'\cdots\forall\bar{x}_{n}'\forall\bar{y}_{1}'\cdots\forall\bar{y}_{n}'\\
		& \biggl(Y\bar{x}_{1}\cdots\bar{x}_{n}\bar{y}_{1}\cdots\bar{y}_{n}\wedge Y\bar{x}_{1}'\cdots\bar{x}_{n}'\bar{y}_{1}'\cdots\bar{y}_{n}'\\
		& \rightarrow\bigwedge_{1\leq i\leq n}\Bigl((\bigwedge_{1\leq j\leq i}\bar{x}_{j}=\bar{x}_{j}')\rightarrow\bar{y}_{i}=\bar{y}_{i}'\Bigr)\biggr),\\
		\\
		\varphi_{3}= & \forall\bar{x}_{1}\forall\bar{y}_{1}\cdots\forall\bar{x}_{n}\forall\bar{y}_{n}(Y\bar{x}_{1}\cdots\bar{x}_{n}\bar{y}_{1}\cdots\bar{y}_{n}\rightarrow \bigwedge_{1\leq i \leq m} C_i).
	\end{array}
	\]
	The relation $Y$ encodes a Skolem function for each $\bar{y}_i$ $(1\leq i \leq n)$, whose value only depends on the values of $\bar{x}_1,\dots,\bar{x}_i$. It is easy to check that $\varphi$ is equivalent to the formula $\exists Y(\varphi_{1}\wedge\varphi_{2}\wedge\varphi_{3})$, which can be converted to the form of $\exists Y\forall\bar{x}(\exists\bar{y}Y\bar{z}\bar{y}\wedge C_{1}\wedge\cdots\wedge C_{m})$, where $\bar{z}$ are variables from $\bar{x}$.
\end{proof}

\begin{prop}\label{prop-exprness}
	Every second-order formula is equivalent to an $\mathrm{SO}\text{-}\mathrm{KROM}^{r}$ formula. More precisely, for each $k\geq 1$,
	if $k$ is even, then $\Sigma^1_{k} \leq \Sigma^1_{k+1}\text{-}\mathrm{KROM}^r$; and if $k$ is odd, then $\Pi^1_{k} \leq \Pi^1_{k+1}\text{-}\mathrm{KROM}^r$.
\end{prop}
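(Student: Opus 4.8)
The plan is to prove the two inclusions by a simultaneous induction on $k$, pushing essentially all of the difficulty into the base case $k=1$, i.e.\ into the claim $\Pi^1_1\leq\Pi^1_2\text{-}\mathrm{KROM}^r$. The inductive step is purely formal: a $\Sigma^1_k$ formula with $k$ even has the shape $\exists\bar Z\,\Psi$ with $\Psi$ a $\Pi^1_{k-1}$ formula, so by the induction hypothesis $\Psi$ is equivalent to a $\Pi^1_k\text{-}\mathrm{KROM}^r$ formula $\forall\bar X_1\exists\bar X_2\cdots\forall\bar x(C_1\wedge\cdots\wedge C_n)$, and prefixing $\exists\bar Z$ turns it into a $\Sigma^1_{k+1}\text{-}\mathrm{KROM}^r$ formula equivalent to the original; the odd case is the mirror image, using $\Pi^1_k$ formulas $\forall\bar Z\,\Psi$ with $\Psi\in\Sigma^1_{k-1}$ and prefixing $\forall\bar Z$. (By the same block‑prefixing trick the base case would in turn follow from $\mathrm{FO}\leq\Sigma^1_1\text{-}\mathrm{KROM}^r$.)

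For the base case I would go through the evaluation problem of quantified Boolean formulas, reusing the construction behind Proposition~\ref{prop-phINkrom}. Given a $\Pi^1_1$ sentence $\Phi=\forall\bar R\,\psi$ over a vocabulary $\sigma$, first bring $\psi$ into a Fagin‑style normal form that is as shallow as possible; then, for each $\sigma$-structure $\mathcal A$, treat the atoms $R_i\bar t$ (for $\bar t$ a tuple over $A$) as propositional variables and unroll the first‑order quantifiers of $\psi$ over $A$, obtaining a $\Pi_1$-$\mathrm{DNF}$ instance $\phi_{\mathcal A}$ with $\mathcal A\models\Phi$ iff $\phi_{\mathcal A}$ is true. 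I would encode $\phi_{\mathcal A}$ in a structure $\mathcal E(\mathcal A)$ over the vocabulary $\{\mathrm{Clause},\mathrm{Var}_1,\mathrm{Pos},\mathrm{Neg}\}$ of Proposition~\ref{prop-phINkrom} whose domain is a fixed power $A^d$, set up so that $\mathrm{Clause},\mathrm{Var}_1,\mathrm{Pos},\mathrm{Neg}$ are \emph{first-order} definable in $\mathcal A$ by simple formulas (built from equalities between coordinates of the $d$-tuples and the finitely many $\sigma$-atoms occurring in $\psi$), and in particular definable with no leading universal first‑order quantifier.

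Let $\Phi_{\mathrm{KROM}}$ be the $\Pi^1_2\text{-}\mathrm{KROM}^r$ formula that Proposition~\ref{prop-phINkrom} produces for $\Pi_1$-$\mathrm{DNF}$ evaluation, so that $\mathcal A\models\Phi$ iff $\mathcal E(\mathcal A)\models\Phi_{\mathrm{KROM}}$. I would then rewrite $\mathcal E(\mathcal A)\models\Phi_{\mathrm{KROM}}$ as a statement $\mathcal A\models\Phi'$: let the first‑order variables of $\Phi_{\mathrm{KROM}}$ range over $d$-tuples from $A$ (so second‑order variables acquire $d$ times their arity and the leading $\forall x\forall y$ becomes a block of universal first‑order quantifiers over $A$), and replace each atom $\mathrm{Clause}\,\bar x$, $\mathrm{Var}_1\,\bar y$, $\mathrm{Pos}\,\bar x\bar y$, $\mathrm{Neg}\,\bar x\bar y$ by the first‑order formula defining it in $\mathcal A$. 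Each clause of $\Phi_{\mathrm{KROM}}$ is a disjunction of $\{\mathrm{Clause},\mathrm{Var}_1,\mathrm{Pos},\mathrm{Neg}\}$-literals together with at most two second‑order literals, plus the goal atom $\exists\bar v\,Y\bar v$; after the substitution such a first‑order literal becomes a first‑order $\sigma$-formula $\vartheta$, and applying Lemma~\ref{lem_fistequiv} to $\vartheta$ — with empty $\bar z$, since $\vartheta$ has no leading universal quantifier — replaces it by a legal goal atom $\exists\bar w\,Y_\vartheta\bar w$ plus Krom clauses $\neg Y_\vartheta\bar u\vee(\text{disjunction of }\sigma\text{-literals})$, whose new existential relation can be pulled out to the $\exists$-block; writing $\vartheta$ in conjunctive normal form and distributing the two original second‑order literals over the conjuncts keeps every clause within the two‑second‑order‑literals budget of a Krom clause. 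Hence $\Phi'$ is a $\Pi^1_2\text{-}\mathrm{KROM}^r$ formula, and with the inductive step this proves the proposition.

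The hard part is the interface between the two halves of the argument: one has to pick the normal form for $\psi$ and the encoding $\mathcal E(\mathcal A)$ delicately enough that (i) the unrolling of a $\Pi^1_1$ sentence really is a $\Pi_1$-$\mathrm{DNF}$ instance of polynomial size — as opposed to a CNF‑type instance, which would not match Proposition~\ref{prop-phINkrom} — (ii) the relations $\mathrm{Clause},\mathrm{Var}_1,\mathrm{Pos},\mathrm{Neg}$ of $\mathcal E(\mathcal A)$ are first‑order definable in $\mathcal A$ in a form for which Lemma~\ref{lem_fistequiv} contributes only a goal atom and Krom clauses, and (iii) all this is carried out spending exactly one extra second‑order quantifier block, so that the result lands in $\Pi^1_2\text{-}\mathrm{KROM}^r$ and not in a higher level. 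The routine nuisances about structures whose domain is too small to carry the required $d$-tuples have to be handled separately.
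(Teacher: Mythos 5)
Your inductive step is where the argument breaks. In the definition of $\mathrm{SO}\text{-}\mathrm{KROM}^r$, the Krom condition is \emph{global}: every clause may contain at most two literals $H_1\vee H_2$ whose relation symbol is among \emph{all} the quantified second-order variables $R_1,\dots,R_m$, not just those of the innermost block. When you apply the induction hypothesis to $\Psi$ over the vocabulary $\sigma\cup\{\bar Z\}$, the resulting $\Pi^1_k\text{-}\mathrm{KROM}^r$ formula treats the $Z_i$ as vocabulary symbols, i.e.\ as $\beta_s$-literals that may occur arbitrarily often, positively and negatively, in each clause (and in any concrete construction they will: the $Z_i$ end up inside the analogues of $\pi_{\mathrm{Pos}},\pi_{\mathrm{Neg}},\pi_{\mathrm{Clause}}$, hence several times per clause). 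Prefixing $\exists\bar Z$ therefore yields a formula that is \emph{not} in $\Sigma^1_{k+1}\text{-}\mathrm{KROM}^r$, and there is no cheap repair: you cannot strengthen the induction hypothesis to "at most two second-order literals counting the free relation variables $\bar Z$ as well", since the clauses genuinely need to mention $\bar Z$ many times. Your parenthetical remark that the base case "would in turn follow from $\mathrm{FO}\leq\Sigma^1_1\text{-}\mathrm{KROM}^r$" is a symptom of the same problem: that inclusion is in fact false (by Proposition~\ref{prop-sig2krom} every $\Sigma^1_1\text{-}\mathrm{KROM}^r$ sentence is a finite disjunction of formulas $\exists\bar y\,\phi_i$ with $\phi_i$ closed under substructures, which rules out e.g.\ $\forall x\exists y\,Exy$ on large directed cycles), so block-prefixing cannot be the "purely formal" step you take it to be.

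The paper avoids induction altogether and this is the essential point you are missing: it handles the whole prefix $\exists X_1\forall X_2\cdots\forall X_k$ in one shot. Lemma~\ref{lem_fistequiv} applied to $\neg\varphi$ absorbs the first-order matrix into one extra $\forall X_{k+1}$ merged with the last universal block, turning the sentence into the evaluation of a $\Sigma_k$-$\mathrm{DNF}$ instance over the domain $A^d$; a single \emph{quantifier-free} interpretation $\Pi$ defines $\mathrm{Clause},\mathrm{Var}_1,\dots,\mathrm{Var}_k,\mathrm{Pos},\mathrm{Neg}$ from $\sigma$ alone, and the formula of Proposition~\ref{prop-phINkrom} is then pulled back through $\Pi$. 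That formula is Krom with respect to all $k+1$ quantified relation variables simultaneously because each clause couples only one $Z_i$-literal with one $Y$-literal — this is exactly the property your layer-by-layer approach destroys. Your base case is close in spirit to this, but note that the paper's use of a quantifier-free (rather than existential) interpretation is not cosmetic: your plan to substitute existentially defined relations and clean up with Lemma~\ref{lem_fistequiv} adds a goal atom $\exists\bar w\,Y_\vartheta\bar w$ to any clause in which the relation occurs positively, which exceeds the two-literal budget whenever that clause already carries two second-order literals; it happens to be harmless for the specific formula of Proposition~\ref{prop-phINkrom} (only $\mathrm{Clause}$ occurs positively there, in a clause with a single second-order literal), but you would need to check this explicitly rather than assert it.
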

\begin{proof}
	Given a $\Sigma^1_k$-formula $\exists X_1 \forall X_2 \dots \exists X_{k-1} \forall X_k \varphi$, where $k$ is even and $\varphi$ does not contain second-order quantifiers, we show that it is equivalent to a $\Sigma^1_{k+1}$-KROM$^r$ formula. The proof for the other cases is essentially the same. 
	By Lemma~\ref{lem_fistequiv}, $\neg \varphi$ is equivalent to a formula $\exists X_{k+1}\forall\bar{x}(\exists\bar{y}X_{k+1}\bar{z}\bar{y}\wedge C_{1}\wedge\cdots\wedge C_{m})$. So $\exists X_1 \forall X_2 \dots \exists X_{k-1} \forall X_k \varphi$ is equivalent to the formula
	\[
	\Phi = \exists X_1 \forall X_2 \dots \exists X_{k-1} \forall X_k \forall X_{k+1}\exists \bar{x}(\forall\bar{y}\neg X_{k+1}\bar{z}\bar{y}\vee D_{1}\vee \cdots\vee D_{m}) 
	\]
	where each $D_{j}$ is a conjunction of atomic (or negated atomic) formulas.
	Suppose that $\Phi$ is over vocabulary $\sigma$. Given a $\sigma$-structure $\mathcal{A}$, we construct a $\Sigma_{k}$-DNF quantified Boolean formula $\psi$ such that $\mathcal{A}\models \Phi$ iff $\psi$ is true.
	Let $A$ be the domain of $\mathcal{A}$.
	We replace the first-order part $\exists \bar{x}(\forall\bar{y}\neg Y\bar{z}\bar{y}\vee D_{1}\vee \cdots\vee D_{m}) $ by 
	\[ 
	\bigvee_{\bar{a}\in A^{|\bar{x}|}} \biggl( \Bigl(\bigwedge_{\bar{b}\in A^{|\bar{y}|}}\neg Y\bar{z}\bar{y}[\bar{y}/\bar{b}] \Bigr)\vee D_{1}\vee \cdots\vee D_{m} \biggr) [\bar{x}/\bar{a}].
	\]
	We remove the clauses with a formula $(\neg)R\bar{c}$ that is false in $\mathcal{A}$ and delete the formulas $(\neg)R\bar{c}$ that are true in $\mathcal{A}$ in every clause, where $R\in \sigma$. 
	Then we replace each quantifier $\exists X_i$ (or $\forall X_i$) with a sequence $\exists X_i \bar{d}_1\dots \exists X_i \bar{d}_{|A|^{arity(X_i)}}$ (or $\forall X_i \bar{d}_1\dots \forall X_i \bar{d}_{|A|^{arity(X_i)}}$) where each $\bar{d}_j\in A^{arity(X_i)}$. We treat the atoms $X_i \bar{d}_j$ as propositional variables, and the resulting formula $\psi$ is a $\Sigma^1_{k}$-DNF quantified Boolean formula. It is clear that $\mathcal{A}\models \Phi$ iff $\psi$ is true.
	
	By Proposition~\ref{prop-phINkrom} and its proof, we know that $\psi$ can be encoded in a $\tau$-structure $\mathcal{B}$, where $\tau = \langle \mathrm{Clause},\mathrm{Var}_{1},\dots,\mathrm{Var}_{k},\mathrm{Pos},\mathrm{Neg}\rangle$ and there is a $\Sigma_{k+1}^{1}$-KROM$^{r}(\tau)$ formula $\Psi$ such that $\psi$ is true iff $\mathcal{B}\models\Psi$. In the following, we define a quantifier-free interpretation
	\[
	\Pi=\Bigl(\pi_{\mathrm{uni}}(\bar{v}),\pi_{\mathrm{Clause}}(\bar{v}),\pi_{\mathrm{Var}_1}(\bar{v}),\dots,\pi_{\mathrm{Var}_{k}}(\bar{v}),
	\pi_{\mathrm{Pos}}(\bar{v}_{1},\bar{v}_{2}),\pi_{\mathrm{Neg}}(\bar{v}_{1},\bar{v}_{2}) \Bigr)
	\]
	of $\tau$ in $\sigma$, where $\pi_{\mathrm{uni}},\pi_{\mathrm{Clause}},\pi_{\mathrm{Var}_1},\dots,\pi_{\mathrm{Var}_{k}},
	\pi_{\mathrm{Pos}},\pi_{\mathrm{Neg}}$ are all quantifier-free formulas over $\sigma$.
	Intuitively, $\pi_{\mathrm{uni}}$ defines the domain of $\mathcal{B}$, $\pi_{\mathrm{Clause}}$ defines the set of clauses of $\psi$, each $\pi_{\mathrm{Var}_i}$ $(1\leq i \leq k)$ defines the set of variables occurring in the quantifier block $\exists(\forall)X_i$, $\pi_{\mathrm{Pos}}$ (or $\pi_{\mathrm{Neg}}$) defines a variable occurs positively (or negatively) in a clause.
	
	For any $\sigma$-structure $\mathcal{A}$, $\Pi$ defines a $\tau$-structure $\mathcal{A}^{\Pi}$ that encodes the formula $\psi$ such that $\mathcal{A}^{\Pi}\models \Psi$ iff $\psi$ is true iff $\mathcal{A}\models\Phi$.
	Since $\Pi$ is an interpretation of $\tau$ in $\sigma$, we can construct a $\Sigma_{k+1}^{1}$-KROM$^{r}(\sigma)$ formula $\Psi^{-\Pi}$ from $\Psi$ such that $\mathcal{A}^{\Pi}\models \Psi$ iff $\mathcal{A}\models\Psi^{-\Pi}$. Therefore, $\Psi^{-\Pi}$ and $\Phi$ are equivalent. For more details of the interpretation from one vocabulary to another, we refer the reader to~\cite{ebbinghaus1995}.
	
	We suppose that $\mathcal{A}$ contains at least two different elements. Let
	\begin{align*}
		& g  =  \max\{arity(X_{1}),\dots,arity(X_{k}),arity(X_{k+1}))\}, \\
		& d  =  3+\max\{(|\bar{x}|+m+1),(g+k+1)\}.
	\end{align*}
	Define the width of $\Pi$ to be $d$. Let $\pi_{\mathrm{uni}}(\bar{v})=\bigwedge^d_{i=1}(v_{i}=v_{i})$, it defines the domain of $\mathcal{A}^{\Pi}$. 
	For any $\bar{a}=(a_{1},a_{2},\dots,a_{d})\in A^d$, we will make the following assumptions:
	\begin{itemize}
		\item if $\bar{a}$ encodes a clause, then $a_1\neq a_3\wedge a_2 = a_3$, and
		\item if $\bar{a}$ encodes a variable, then $a_1 \neq a_3\wedge a_1 = a_2$.
	\end{itemize}
	If $\bar{a}$ encodes a clause, it is partitioned as follows
	\[
	\underset{a_2 = a_3}{\underbrace{a_1 a_2 a_3}}\underset{m+1}{\underbrace{a_{4}\cdots a_{m+4}}}\underset{|\bar{x}|}{\underbrace{a_{m+5}\cdots a_{m+4+|\bar{x}|}}}\underset{\text{padding elements}}{\underbrace{a_{m+5+|\bar{x}|}\cdots a_{d}}}
	\]
	where $a_1\neq a_3, a_2 = a_3$, and $a_{m+5},\dots, a_{m+4+|\bar{x}|}$ are interpretations for $\bar{x}$. $a_4,\dots, a_{m+4}$ are used to encode the clauses $\forall\bar{y}\neg X_{k+1}\bar{z}\bar{y}$, $D_1,\dots,D_m$. More precisely, we use $a_1 = a_4 \wedge \bigwedge_{5\leq j \leq m+4} a_3 = a_j$ to indicate that $\bar{a}$ encodes the clause $\forall\bar{y}\neg X_{k+1}\bar{z}\bar{y}$, and use $\bigwedge_{4\leq j \leq i+4} a_1 = a_j \wedge \bigwedge_{i+5 \leq h \leq m+4} a_3 = a_h$ to indicate that $\bar{a}$ encodes $D_i$ for $i\in \{1,\dots, m\}$, respectively. 
	This can be expressed by the formula
	\begin{equation}\label{eqt-clau1}
		\biggl( v_1 = v_4 \wedge \bigwedge_{5\leq j \leq m+4} v_3 = v_j \biggr) \vee
		\bigvee_{1\leq i \leq m} \biggl( \bigwedge_{4\leq j \leq i+4} v_1 = v_j \wedge \bigwedge_{i+5 \leq h \leq m+4} v_3 = v_h \biggr).
	\end{equation}
	We also require that there is no formula that is false in clause $D_i$ $(1\leq i \leq m)$, when $\bar{x}$ are interpreted by $a_{m+5}\cdots a_{m+4+|\bar{x}|}$.
	Let $\alpha_i$ denote the first-order part of $D_i$. This is can be expressed by the formula
	\begin{equation}\label{eqt-clau2}
		\bigvee_{1\leq i \leq m}\biggl(
		\Bigl( \bigwedge_{4\leq j \leq i+4} v_1 = v_j \wedge \bigwedge_{i+5 \leq h \leq m+4} v_3 = v_h \Bigr)
		\rightarrow \alpha_i[\bar{x}/v_{m+5}\cdots v_{m+4+|\bar{x}|}] \biggr).
	\end{equation}
	All padding elements must equal $a_1$, which can be expressed by 
	\begin{equation}\label{eqt-clau3}
		v_1\neq v_3\wedge v_2 = v_3 \wedge \bigwedge_{m+5+|\bar{x}|\leq i \leq d} v_1 = v_i
	\end{equation}
	Define $\pi_{\mathrm{Clause}}(\bar{v})$ to be the conjunction of \eqref{eqt-clau1}, \eqref{eqt-clau2} and \eqref{eqt-clau3}.
	
	If $\bar{a}$ encodes a variable, it is partitioned as follows
	\[
	\underset{a_{1}=a_{2}}{\underbrace{a_1 a_2 a_3}}\underset{k+1}{\underbrace{a_4\cdots a_{k+4}}}\underset{artiy(X_i)}{\underbrace{a_{k+5}\cdots a_{k+4+arity(X_i)}}}\underset{\text{padding elements}}{\underbrace{a_{k+5+arity(X_i)}\cdots a_{d}}}
	\]
	where $a_1\neq a_3, a_1 = a_2$, and $a_4\cdots a_{k+4}$ encode $X_1,\dots,X_{k+1}$. More precisely, we use $\bigwedge_{4\leq j \leq i+3} a_1 = a_j \wedge \bigwedge_{i+4 \leq h \leq k+4} a_3 = a_h$ to indicate that $\bar{a}$ encodes the variable with relation symbol $X_i$, where $(1 \leq i \leq k+1)$. This can be expressed by the following formula
	\[ 
	\mathrm{Var}_i(\bar{v}) = \biggl( \bigwedge_{4\leq j \leq i+3} v_1 = v_j \wedge \bigwedge_{i+4 \leq h \leq k+4} v_3 = v_h \biggr).
	\]
	We use $a_{k+5}\cdots a_{k+4+arity(X_i)}$ to indicate that $\bar{a}$ encodes the atom $X_i a_{k+5}\cdots a_{k+4+arity(X_i)}$. We also require that all padding elements $a_{k+5+arity(X_i)}\cdots a_{d}$ equal $a_1$. The formula $\pi_{\mathrm{Var}_i}(\bar{v})$, for $i \in \{1,\dots, k-1\}$, is defined by
	\[ \pi_{\mathrm{Var}_i}(\bar{v})=  \biggl( v_1\neq v_3\wedge v_1 = v_2 \wedge \mathrm{Var}_i(\bar{v}) \wedge \bigwedge_{k+5+arity(X_i)\leq j \leq d} v_1 = v_j \biggr). \]
	Define $\pi_{V_k}(\bar{v})$ to be the conjunction of $v_1\neq v_3\wedge v_1 = v_2$ and
	\[
	\biggl( \mathrm{Var}_k(\bar{v}) \wedge \bigwedge_{k+5+arity(X_k)\leq j \leq d} v_1 = v_j \biggr) \vee 
	\biggl( \mathrm{Var}_{k+1}(\bar{v}) \wedge \bigwedge_{k+5+arity(X_{k+1})\leq j \leq d} v_1 = v_j \biggr).
	\]
	
	In the following we define the formula $\pi_{\mathrm{Pos}}(\bar{v}_{1},\bar{v}_{2})$, which expresses that the atom encoded by $\bar{v}_2$ occurs positively in clause $D_j$ $(1\leq j \leq m)$ encoded by $\bar{v}_1$.
	Let $\bar{v}_1 = v_{1,1}\dots v_{1,d}$ and $\bar{v}_2 = v_{2,1}\dots v_{2,d}$.
	We use the following formula $\varphi_{D_j} (\bar{v}_1)$ to express that $\bar{v}_1$ encodes clause $D_j$ for $j\in \{1,\dots, m\}$.
	\[
	\varphi_{D_j} (\bar{v}_1) =\biggl( \pi_{\mathrm{Clause}}(\bar{v}_1)\wedge 
	\Bigl( \bigwedge_{4\leq l \leq j+4} v_{1,1} = v_{1,l} \wedge \bigwedge_{j+5 \leq h \leq m+4} v_{1,3} = v_{1,h} \Bigr) \biggr).
	\]
	Suppose that the atomic formula $X_i \bar{x}'$ occurs in clause $D_j$, where $\bar{x}'=x_1'\dots x_{arity(X_i)}'$ are variables from $\bar{x}$, and $\bar{v}_2$ encodes the atom $X_i \bar{v}_2'$ where $\bar{v}_2'$ are the corresponding elements in $\bar{v}_2$ by its definition.
	Let $\bar{v}_1'$ be obtained by replacing $\bar{x}'$ with the corresponding elements in $\bar{v}_1$ that encodes $D_j$ (note that $v_{1,m+5},\dots, v_{1,m+4+|\bar{x}|}$ are interpretations for $\bar{x}$, and $X_i \bar{x}'[\bar{x}/(v_{1,m+5},\dots, v_{1,m+4+|\bar{x}|})]= X_i\bar{v}_1'$). We require that $X_i\bar{v}_1' = X_i\bar{v}_2'$, i.e., $\bar{v}_1' = \bar{v}_2'$.
	The following formula $\alpha_{D_j, X_i}(\bar{v}_1,\bar{v}_2)$ expresses that the atom encoded by $\bar{v}_2$ occurs positively in clause $D_j$ encoded by $\bar{v}_1$.
	For $i \in \{1,\dots, k-1\}$, define
	\[
	\alpha_{D_j, X_i}(\bar{v}_1,\bar{v}_2) = \biggl( \varphi_{D_j} (\bar{v}_1) \wedge \pi_{\mathrm{Var}_i}(\bar{v}_2) \wedge \bigvee_{X_i \bar{x}' \text{ occurs positively in } D_j} \bar{v}_1'=\bar{v}_2' \biggr),
	\]
	and for $i \in \{k, k+1\}$, define
	\[
	\alpha_{D_j, X_i}(\bar{v}_1,\bar{v}_2) = \biggl( \varphi_{D_j} (\bar{v}_1) \wedge \pi_{\mathrm{Var}_k}(\bar{v}_2) \wedge \mathrm{Var}_i(\bar{v}_2) \wedge \bigvee_{X_i \bar{x}' \text{ occurs positively in } D_j} \bar{v}_1'=\bar{v}_2' \biggr).
	\]
	Define $\pi_{\mathrm{Pos}}(\bar{v}_1,\bar{v}_2)$ to be the conjunction of $v_{1,1} = v_{2,1} \wedge v_{1,3} = v_{2,3}$ and
	\[ 
	\bigvee \{\alpha_{D_j, X_i}(\bar{v}_1,\bar{v}_2)\ |\ X_i \text{ has a positive occurrence in } D_j\}.
	\]

	Similarly, we can define the formula $\pi_{\mathrm{Neg}}'(\bar{v}_1,\bar{v}_2)$ to express that the atom encoded by $\bar{v}_2$ occurs negatively in clause $D_j$ $(1\leq j \leq m)$ encoded by $\bar{v}_1$.
	For the clause $\forall\bar{y}\neg X_{k+1}\bar{z}\bar{y}$, let $\bar{v}_1''$ and $\bar{v}_2''$ be obtained by replacing $\bar{z}$ with the corresponding elements in $\bar{v}_1$ that encodes the clause, and the corresponding elements in $\bar{v}_2$ that encodes $X_{k+1}$, respectively. Let
	\[ 
	\beta_{X_{k+1}}(\bar{v}_1,\bar{v}_2) =\left(
	\begin{array}{l}
		\pi_{\mathrm{Clause}}(\bar{v}_1)\wedge (v_{1,1} = v_{1,4} \wedge \bigwedge_{5\leq i \leq m+4} v_{1,3} = v_{1,i}) \wedge \\
		\pi_{\mathrm{Var}_k}(\bar{v}_2) \wedge \mathrm{Var}_{k+1}(\bar{v}_2) \wedge v_{1,1} = v_{2,1} \wedge v_{1,3} = v_{2,3} \wedge \bar{v}_1''= \bar{v}_2''
	\end{array} \right).
	\]
	Define $\pi_{\mathrm{Neg}}(\bar{v}_1,\bar{v}_2) = \pi_{\mathrm{Neg}}'(\bar{v}_1,\bar{v}_2) \vee \beta_{X_{k+1}}(\bar{v}_1,\bar{v}_2)$.
	 
	Let $\Theta$ be the following formula
	\[
	\exists Z_1\forall Z_2\dots\exists Z_{k-1}\forall Z_k\exists Y \forall \bar{v}_1\forall \bar{v}_2 \left(\begin{array}{l}
		\exists \bar{z} Y(\bar{z})\wedge \Bigl(Y(\bar{v}_1)\rightarrow \pi_{\mathrm{Clause}}(\bar{v}_1)\Bigr)\wedge\\
		\bigwedge_{1\leq i\leq k}\Bigl(Y(\bar{v}_1)\wedge \pi_{\mathrm{Pos}}(\bar{v}_1,\bar{v}_2)\wedge \pi_{\mathrm{Var}_i}(\bar{v}_2)\rightarrow Z_{i}(\bar{v}_2)\Bigr)\wedge \\
		\bigwedge_{1\leq i\leq k}\Bigl(Y(\bar{v}_1)\wedge \pi_{\mathrm{Neg}}(\bar{v}_1,\bar{v}_2)\wedge \pi_{\mathrm{Var}_i}(\bar{v}_2)\rightarrow \neg Z_{i}(\bar{v}_2)\Bigr)
	\end{array}\right).
	\]
	The formula $\Theta$ says that there is a valuation for $X_1$, for any valuation to $X_2$, \dots, there is a valuation to $X_{k-1}$, for any valuation to $X_k$ and $X_{k+1}$, there is a nonempty set $Y$ of clauses, such that all literals in the clauses are true under the valuation.
	$\Phi$ and $\Theta$ are equivalent on the structures with at least two elements. For any finite structure, there is a quantifier-free formula that captures its isomorphism type~\cite{ebbinghaus1995}. So on one-element structures, $\Phi$ is equivalent to $\forall x \forall y(x=y\wedge \delta(x))$, where $\delta(x)$ is a disjunction of isomorphism types of one-element structures satisfying $\Phi$.
	The formulas $\Theta \vee \forall x \forall y(x=y\wedge \delta(x))$ and $\Phi$ are equivalent on all finite structures.
	Since all formulas in $\Pi$ are quantifier-free,  $\Theta \vee \forall x \forall y(x=y\wedge \delta(x))$ can be converted to an equivalent $\Sigma_{k+1}^{1}$-KROM$^{r}$ formula by elementary techniques.
\end{proof}

The following proposition says that the data complexity of SO-KROM$^r$ is in the polynomial hierarchy.

\begin{prop}\label{prop-datacomp}
	For each $k\geq 1$, if $k$ is odd, then the data complexity of $\Pi^1_{k+1}\text{-}\mathrm{KROM}^r$ and $\Pi^1_{k+2}\text{-}\mathrm{KROM}^r$ are in $\Pi^p_k$; if $k$ is even, then the data complexity of $\Sigma^1_{k+1}\text{-}\mathrm{KROM}^r$ and $\Sigma^1_{k+2}\text{-}\mathrm{KROM}^r$ are in $\Sigma^p_k$.
\end{prop}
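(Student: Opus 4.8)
The plan is to first remove a redundant innermost second-order block and then simulate the formula by an alternating polynomial-time computation whose quantifier pattern matches $\Sigma^p_k$ (respectively $\Pi^p_k$). Since $k+1$ is odd when $k$ is even and even when $k$ is odd, Corollary~\ref{coro-sokequiv} gives $\Sigma^1_{k+1}\text{-}\mathrm{KROM}^{r}\equiv\Sigma^1_{k+2}\text{-}\mathrm{KROM}^{r}$ for even $k$ and $\Pi^1_{k+1}\text{-}\mathrm{KROM}^{r}\equiv\Pi^1_{k+2}\text{-}\mathrm{KROM}^{r}$ for odd $k$, so it suffices to bound the data complexity of $\Sigma^1_{k+1}\text{-}\mathrm{KROM}^{r}$ by $\Sigma^p_k$ for even $k$, and of $\Pi^1_{k+1}\text{-}\mathrm{KROM}^{r}$ by $\Pi^p_k$ for odd $k$. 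These two cases being dual, I would write out only the first.

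So fix a $\Sigma^1_{k+1}\text{-}\mathrm{KROM}^{r}$ formula $\Phi=\exists X_1\forall X_2\cdots\exists X_{k+1}\forall\bar{x}\,\psi$ with $k$ even; here the odd-indexed second-order blocks are existential and the even-indexed ones universal, so the last block $\exists X_{k+1}$ is existential and $X_k$ is bound by $\forall$. The matrix $\psi$ is a conjunction of clauses $C_1,\dots,C_m$, each a disjunction $\beta_1\vee\cdots\vee\beta_q\vee H_1\vee H_2$ with at most two second-order disjuncts. On an input structure $\mathcal{A}$ with domain $A$, I would run a $\Sigma^p_k$ computation that existentially guesses the relations of the first block, universally branches over those of the second block, and so on through the first $k$ blocks; each relation $X_i$ has $|A|^{arity(X_i)}$ bits, polynomial for a fixed $\Phi$, and the alternation of these $k$ stages is exactly that of $\Sigma^p_k$ (the $k$-th stage universal since $k$ is even). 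It then remains to decide, as a polynomial-time predicate of $(\mathcal{A},X_1,\dots,X_k)$, whether some relation $X_{k+1}$ on $A$ makes $(\mathcal{A},X_1,\dots,X_k,X_{k+1})\models\forall\bar{x}\,\psi$ true; establishing this places the data complexity of $\Phi$ in $\Sigma^p_k$.

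The step I expect to be the main obstacle is precisely this polynomial-time core. The idea is to expand $\forall\bar{x}$ into the conjunction $\bigwedge_{\bar{a}\in A^{|\bar{x}|}}\bigwedge_{1\leq j\leq m}C_j[\bar{x}/\bar{a}]$ of polynomially many ground clauses, and in each such clause evaluate all first-order literals and all second-order disjuncts that mention one of $X_1,\dots,X_k$ (including the atoms $\exists\bar{z}\,X_i\bar{z}$ for $i\leq k$) against the given interpretations, discarding clauses that become true and reporting unsatisfiability if some clause becomes empty. Since each $C_j$ has at most two second-order disjuncts, every surviving clause is then a disjunction of at most two items drawn from $\{X_{k+1}\bar{d},\ \neg X_{k+1}\bar{d}\ :\ \bar{d}\in A^{arity(X_{k+1})}\}\cup\{E\}$, where $E$ abbreviates the atom $\exists\bar{z}\,X_{k+1}\bar{z}$, that is, $\bigvee_{\bar{d}}X_{k+1}\bar{d}$. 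The only thing that stops this from being a plain $2$-SAT instance is $E$, which I would eliminate by a case split on its truth value: if $E$ is false, set every $X_{k+1}\bar{d}$ to false and check the surviving clauses directly; if $E$ is true, every clause with $E$ positively is satisfied, every clause with $\neg E$ collapses to its remaining literal, what is left is an ordinary $2$-CNF over the variables $X_{k+1}\bar{d}$, and the side condition that at least one $X_{k+1}\bar{d}$ be true is tested by running, for each $\bar{d}$, a $2$-SAT check on this $2$-CNF together with the unit clause $X_{k+1}\bar{d}$. This amounts to polynomially many $2$-SAT instances of polynomial size, hence is in $\mathrm{P}$, and $(\mathcal{A},X_1,\dots,X_k)\models\exists X_{k+1}\forall\bar{x}\,\psi$ holds iff one of the two cases succeeds.

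For odd $k$ the argument is symmetric: the formula has the form $\Phi=\forall X_1\exists X_2\cdots\exists X_{k+1}\forall\bar{x}\,\psi$, where now the odd-indexed blocks are universal, so $\exists X_{k+1}$ is existential (as $k+1$ is even) and $X_k$ is bound by $\forall$; its first $k$ blocks are handled by a $\Pi^p_k$ computation with the matching alternation (the $k$-th stage universal, since $k$ is odd), and the innermost $\exists X_{k+1}\forall\bar{x}\,\psi$ is decided by the identical polynomial-time core, yielding the $\Pi^p_k$ bound. One-element and other small structures require no special treatment, since the expansion of $\forall\bar{x}$ and the ensuing $2$-SAT reasoning go through verbatim over any nonempty domain $A$.
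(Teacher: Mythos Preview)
Your argument is correct and follows essentially the same high-level outline as the paper: first collapse the extra level via Corollary~\ref{coro-sokequiv}, then simulate the outer $k$ second-order blocks by a $\Sigma^p_k$ (respectively $\Pi^p_k$) alternating computation, and finally show that once $X_1,\dots,X_k$ are fixed the innermost $\exists X_{k+1}\forall\bar{x}\,\psi$ is decidable in polynomial time.

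The difference lies only in how the innermost block is handled. The paper does not re-argue this from scratch: after substituting the truth values of all $\exists\bar{z}X_i\bar{z}$ with $i\leq k$, what remains is literally a $\Sigma^1_1$-KROM$^r$ formula, and the paper simply invokes Proposition~\ref{prop-sg1kromNL} to place it in $\mathrm{NL}\subseteq\mathrm{P}$. Your route is more self-contained: you expand $\forall\bar{x}$, evaluate everything except the $X_{k+1}$-disjuncts, case-split on the single ``big'' atom $E=\exists\bar{z}X_{k+1}\bar{z}$, and reduce to polynomially many $2$-SAT instances. This is a perfectly good direct proof, trading the sharper $\mathrm{NL}$ bound (which you do not need here) for not having to rely on the earlier proposition; in spirit it reproves the relevant part of Propositions~\ref{prop-sig2krom}--\ref{prop-sg1kromNL} inline.

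Two minor remarks. First, the innermost existential block may contain several relation variables $Y_1,\dots,Y_s$, giving several nonemptiness atoms $E_1,\dots,E_s$; your case split then ranges over the $2^s$ truth-value patterns, which is still constant for a fixed formula, so nothing breaks. Second, the syntax of SO-KROM$^r$ admits $\exists\bar{z}R\bar{z}$ only positively, so the ``clauses with $\neg E$'' you mention never actually occur; that branch of your case analysis is vacuous.
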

\begin{proof}
	From Corollary~\ref{coro-sokequiv}, we know that $\Pi^1_{k+1}$-KROM$^r \equiv \Pi^1_{k+2}$-KROM$^r$ if $k$ is odd, and $\Sigma^1_{k+1}$-KROM$^r \equiv \Sigma^1_{k+2}$-KROM$^r$ if $k$ is even.
	We only prove that the data complexity of $\Pi^1_{k+1}$-KROM$^r$ is in $\Pi^p_k$ ($k$ is odd), the proof for the other cases is similar.
	
	Let $\Phi= \forall \olsi{X}_1 \exists \olsi{X}_2 \dots \forall \olsi{X}_k \exists \olsi{X}_{k+1} \forall \bar{x} \varphi$ be a $\Pi^1_{k+1}$-KROM$^r$ formula ($k$ is odd).
	Given a structure $\mathcal{A}$, we construct an alternating Turing machine that first assigns the values of $\olsi{X}_1,\olsi{X}_2, \dots, \olsi{X}_k$ alternately between universal and existential moves according to their quantifier types. 
	This step can be done in $\Pi^p_k$. The complexity of deciding whether $(\mathcal{A},\olsi{X}^A_1,\olsi{X}^A_2, \dots, \olsi{X}^A_k)\models \exists\olsi{X}_{k+1} \forall \bar{x} \varphi$ is in NL, since all occurrences of $\exists \bar{z} X_i \bar{z}$ $(1\leq i \leq k)$ in $\exists\olsi{X}_{k+1} \forall \bar{x} \varphi$ can be replaced by their truth values, and the resulting formula is a $\Sigma^1_1$-KROM$^r$ formula which can be evaluated in NL by Proposition~\ref{prop-sg1kromNL}.
	Therefore, the total complexity of checking $\mathcal{A}\models \Phi$ is in $\Pi^p_k$.
\end{proof}

Since $\Sigma^1_k$ captures $\Sigma^p_k$ and $\Pi^1_k$ captures $\Pi^p_k$ for $k\geq 1$, combining Proposition~\ref{prop-exprness} with Proposition~\ref{prop-datacomp} we conclude the following corollary.

\begin{cor}
	On all finite structures, for each $k\geq 1$, if $k$ is even, then $\Sigma^1_{k+1}\text{-}\mathrm{KROM}^r\equiv \Sigma^1_k$, and if $k$ is odd, then $\Pi^1_{k+1}\text{-}\mathrm{KROM}^r \equiv \Pi^1_k$.
\end{cor}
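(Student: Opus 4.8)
The plan is to establish each equivalence by proving its two inclusions separately, treating the even case ($\Sigma^1_{k+1}$-KROM$^r\equiv\Sigma^1_k$) in detail and obtaining the odd case ($\Pi^1_{k+1}$-KROM$^r\equiv\Pi^1_k$) by the dual argument. One inclusion is already in hand: for even $k$, Proposition~\ref{prop-exprness} gives $\Sigma^1_k\leq\Sigma^1_{k+1}$-KROM$^r$ directly, so nothing further is needed in that direction.

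For the reverse inclusion $\Sigma^1_{k+1}$-KROM$^r\leq\Sigma^1_k$ I would argue through data complexity. By Proposition~\ref{prop-datacomp}, for even $k$ the class of finite models of any $\Sigma^1_{k+1}$-KROM$^r$ formula is decidable in $\Sigma^p_k$. Now I invoke the classical characterization that $\Sigma^1_k$ captures $\Sigma^p_k$ on all finite structures (Fagin for $k=1$, Stockmeyer for the higher levels); its ``completeness'' half says that every class of finite structures lying in $\Sigma^p_k$ is definable by some $\Sigma^1_k$ formula. Applying this to the model class of an arbitrary $\Sigma^1_{k+1}$-KROM$^r$ formula produces an equivalent $\Sigma^1_k$ formula, which is exactly the desired inclusion. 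Combining the two inclusions yields $\Sigma^1_{k+1}$-KROM$^r\equiv\Sigma^1_k$ for even $k$.

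The odd case is then obtained by substituting ``$\Pi$'' for ``$\Sigma$'' and ``$\Pi^p_k$'' for ``$\Sigma^p_k$'' throughout: use the $\Pi$ clause of Proposition~\ref{prop-exprness} for $\Pi^1_k\leq\Pi^1_{k+1}$-KROM$^r$, the $\Pi$ clause of Proposition~\ref{prop-datacomp} for the $\Pi^p_k$ data-complexity bound, and the fact that $\Pi^1_k$ captures $\Pi^p_k$ on all finite structures for the converse inclusion.

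There is no genuinely hard step here — the substantive work has already been done in Propositions~\ref{prop-exprness} and~\ref{prop-datacomp}, and the corollary is just the bookkeeping of pairing an expressibility lower bound with a data-complexity upper bound via the known capturing theorem. The one point I would flag is that this argument uses the capturing of $\Sigma^p_k$ (resp.\ $\Pi^p_k$) by $\Sigma^1_k$ (resp.\ $\Pi^1_k$) in its full strength over \emph{all} finite structures, not merely over ordered ones; since the Fagin--Stockmeyer characterization indeed holds without an order assumption, this is what allows the corollary to be stated on all finite structures rather than only on ordered ones.
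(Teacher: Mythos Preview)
Your argument is correct and matches the paper's own reasoning exactly: the paper derives the corollary by combining Proposition~\ref{prop-exprness} (for one inclusion) with Proposition~\ref{prop-datacomp} and the Fagin--Stockmeyer capturing of $\Sigma^p_k$/$\Pi^p_k$ by $\Sigma^1_k$/$\Pi^1_k$ on all finite structures (for the other). Your observation that the unordered version of the capturing theorem is what makes the statement hold on all finite structures is precisely the point.
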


\begin{thm}\label{them-maincapture}
	On all finite structures, for each $k\geq 1$, if $k$ is even, then $\Sigma^1_{k+1}\text{-}\mathrm{KROM}^r$ captures $\Sigma^p_k$, and if $k$ is odd, then $\Pi^1_{k+1}\text{-}\mathrm{KROM}^r$ captures $\Pi^p_k$.
\end{thm}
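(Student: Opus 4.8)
The plan is to derive the theorem directly by combining the two halves of the capturing requirement that have already been established, together with the classical second-order characterization of the levels of the polynomial hierarchy. Recall that a logic $\mathcal{L}$ captures a complexity class $\mathcal{C}$ precisely when (i) the data complexity of every $\mathcal{L}$-sentence lies in $\mathcal{C}$, and (ii) every $\mathcal{C}$-decidable class of finite structures is $\mathcal{L}$-definable. Fix $k\ge 1$; I would treat the even case ($\Sigma^1_{k+1}\text{-}\mathrm{KROM}^r$ versus $\Sigma^p_k$) in detail, the odd case being the exact dual with $\Pi$ in place of $\Sigma$ throughout.

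For part (i), nothing new is needed: Proposition~\ref{prop-datacomp} already states that the data complexity of $\Sigma^1_{k+1}\text{-}\mathrm{KROM}^r$ is in $\Sigma^p_k$ when $k$ is even (and of $\Pi^1_{k+1}\text{-}\mathrm{KROM}^r$ in $\Pi^p_k$ when $k$ is odd).

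For part (ii), I would start from an arbitrary class $S$ of finite structures decidable in $\Sigma^p_k$. Since $\Sigma^1_k$ captures $\Sigma^p_k$ on all finite structures~\cite{Immerman1998descrip}, there is a $\Sigma^1_k$ sentence $\varphi$ with $\mathrm{Mod}(\varphi)=S$. By Proposition~\ref{prop-exprness} (even-$k$ clause), $\varphi$ is equivalent to some $\Sigma^1_{k+1}\text{-}\mathrm{KROM}^r$ sentence $\Psi$, so $\mathrm{Mod}(\Psi)=S$ and $S$ is $\Sigma^1_{k+1}\text{-}\mathrm{KROM}^r$-definable. Combining (i) and (ii) yields that $\Sigma^1_{k+1}\text{-}\mathrm{KROM}^r$ captures $\Sigma^p_k$ for even $k$; replacing $\Sigma$ by $\Pi$ and invoking the odd-$k$ clauses of Propositions~\ref{prop-exprness} and~\ref{prop-datacomp} together with the fact that $\Pi^1_k$ captures $\Pi^p_k$ gives the odd case.

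Since all the substantive work is already contained in Propositions~\ref{prop-exprness} and~\ref{prop-datacomp}, there is no genuine obstacle left in this final step. The only point worth stating carefully is that the underlying second-order capturing result for $\Sigma^1_k$ (resp.\ $\Pi^1_k$) holds on \emph{all} finite structures, not merely ordered ones, so no built-in order assumption sneaks back in — which is indeed the case, as these are the ``NP-style'' characterizations that need no order. One may also note in passing that by Corollary~\ref{coro-sokequiv} the same capturing statements hold one level higher ($\Sigma^1_{k+2}\text{-}\mathrm{KROM}^r$, resp.\ $\Pi^1_{k+2}\text{-}\mathrm{KROM}^r$), and that taking the union over all $k$ shows that $\mathrm{SO}\text{-}\mathrm{KROM}^r$ captures $\mathrm{PH}$.
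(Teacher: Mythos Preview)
Your proposal is correct and matches the paper's approach: the theorem is stated there without proof, as an immediate consequence of Propositions~\ref{prop-exprness} and~\ref{prop-datacomp} together with the classical fact that $\Sigma^1_k$ (resp.\ $\Pi^1_k$) captures $\Sigma^p_k$ (resp.\ $\Pi^p_k$) on all finite structures. Your explicit unpacking of the two halves of the capturing definition is exactly what the paper leaves implicit.
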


\section{An extended version of second-order Krom logic}\label{sec-exspkr}
In this section, we define second-order extended Krom logic and study its expressive power and data complexity.
\begin{defi}
	Second-order extended Krom logic over a vocabulary $\tau$, denoted by SO-EKROM($\tau$), is the set of second-order formulas
	of the form
	\[
	\forall X_{1}\exists Y_{1}\cdots\forall X_{k}\exists Y_{k}\forall \bar{x}(C_{1}\wedge\cdots\wedge C_{n}),
	\]
	where $C_{i}$ $(1\leq i \leq n)$ are extended Krom clauses with respect to $Y_1,\dots,Y_k$, more precisely,
	each $C_{i}$ is a disjunction of the form
	\[
	\alpha_{1}\vee\cdots\vee\alpha_{l}\vee H_{1}\vee H_{2},
	\]
	where
	\begin{enumerate}
		\item each $\alpha_{s}$ is either $Q\bar{y}$ or $\neg Q\bar{y}$, where $Q \in\tau\cup\{X_{1},\dots,X_k \}$,
		\item each $H_{t}$ is either $Y_{i}\overline{z}$ or its negation $\neg Y_{i}\overline{z}$, where $(1\leq i \leq k)$.
	\end{enumerate}
\end{defi}

\begin{prop}
	$\mathrm{SO\text{-}EKROM}$ is closed under substructures.
\end{prop}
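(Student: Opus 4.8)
The plan is to establish closure under (induced) substructures by a Skolem-function argument that exploits the fact that an $\mathrm{SO}\text{-}\mathrm{EKROM}$ formula
$\Phi = \forall X_{1}\exists Y_{1}\cdots\forall X_{k}\exists Y_{k}\forall\bar{x}(C_{1}\wedge\cdots\wedge C_{n})$
has a \emph{quantifier-free} matrix: each clause $C_{j}$ refers to the universally quantified predicates $X_{1},\dots,X_{k}$ and to the $Y_{i}$'s only through literals $(\neg)Q\bar{y}$ and $(\neg)Y_{i}\bar{z}$, never through a subformula $\exists\bar{z}\,Y_{i}\bar{z}$ --- this is exactly the feature distinguishing $\mathrm{SO}\text{-}\mathrm{EKROM}$ from $\mathrm{SO}\text{-}\mathrm{KROM}^{r}$. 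So I would fix a $\tau$-structure $\mathcal{A}$ with $\mathcal{A}\models\Phi$ and an induced substructure $\mathcal{B}$ with domain $B\subseteq A$, and prove $\mathcal{B}\models\Phi$.

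First I would unfold $\mathcal{A}\models\Phi$ by Skolemization: there are maps $F_{1},\dots,F_{k}$, where $F_{i}$ sends a tuple of relations $(X_{1},\dots,X_{i})$ over $A$ of the appropriate arities to a relation $F_{i}(X_{1},\dots,X_{i})$ over $A$ of arity $arity(Y_{i})$, such that for \emph{all} relations $X_{1},\dots,X_{k}$ over $A$, putting $Y_{i}:=F_{i}(X_{1},\dots,X_{i})$, the expansion $(\mathcal{A},X_{1},\dots,X_{k},Y_{1},\dots,Y_{k})$ satisfies $\forall\bar{x}(C_{1}\wedge\cdots\wedge C_{n})$. To obtain witness functions on $\mathcal{B}$, I would use that a relation over $B$ is in particular a relation over $A$ (one containing no tuple with a component outside $B$), and set, for relations $X_{1},\dots,X_{i}$ over $B$,
\[
G_{i}(X_{1},\dots,X_{i}) \;:=\; F_{i}(X_{1},\dots,X_{i})\cap B^{arity(Y_{i})}.
\]
Since $F_{i}$ depends only on $X_{1},\dots,X_{i}$, so does $G_{i}$, which is precisely the dependency pattern permitted by the prefix $\forall X_{1}\exists Y_{1}\cdots$; tracking this alternation is the only bookkeeping the argument needs and it presents no real difficulty.

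It then remains to check that the $G_{i}$ are legitimate Skolem functions on $\mathcal{B}$. I would fix arbitrary relations $X_{1},\dots,X_{k}$ over $B$, set $Y_{i}:=G_{i}(X_{1},\dots,X_{i})$, regard the $X_{i}$ as relations over $A$, and put $Y_{i}':=F_{i}(X_{1},\dots,X_{i})$, so that $Y_{i}=Y_{i}'\cap B^{arity(Y_{i})}$ and $(\mathcal{A},X_{1},\dots,X_{k},Y_{1}',\dots,Y_{k}')\models\forall\bar{x}(C_{1}\wedge\cdots\wedge C_{n})$. The decisive step is the standard absoluteness of quantifier-free formulas between an induced substructure and its ambient structure: for any tuple $\bar{b}$ over $B$ and any clause $C_{j}$, every atom occurring in $C_{j}[\bar{b}]$ --- a $\tau$-atom, an equality, an $X_{i}$-atom, or a $Y_{i}$-atom --- is evaluated on tuples of elements of $B$ (since $\tau$ has no function symbols, all terms are variables or constants, and the constants of $\mathcal{B}$ lie in $B$), and on such tuples $\mathcal{B}$ agrees with $\mathcal{A}$ on the $\tau$-relations while the interpretations of $X_{i}$ and $Y_{i}$ agree, on tuples over $B$, with the relations $X_{i}$ and $Y_{i}'$ used over $A$; hence $C_{j}[\bar{b}]$ has the same truth value in the expansion of $\mathcal{B}$ as in that of $\mathcal{A}$. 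Since $(\mathcal{A},X_{1},\dots,X_{k},Y_{1}',\dots,Y_{k}')$ satisfies every $C_{j}$ at every tuple over $A$, in particular at every tuple over $B$, the expansion $(\mathcal{B},X_{1},\dots,X_{k},Y_{1},\dots,Y_{k})$ satisfies $\forall\bar{x}(C_{1}\wedge\cdots\wedge C_{n})$; as $X_{1},\dots,X_{k}$ were arbitrary, $\mathcal{B}\models\Phi$.

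I expect the write-up to be short, and the one point deserving care is the absoluteness step together with the correct handling of the Skolem-function dependencies; neither is deep, but both must be spelled out to keep the argument airtight. It is worth noting that the proof uses neither the Krom restriction to at most two second-order literals per clause nor anything about $\mathcal{B}$ beyond its being an induced substructure: the sole essential ingredient is that the matrix of $\Phi$ is quantifier-free, so that no existential first-order witness can escape the substructure --- which is exactly why the analogous statement fails for $\mathrm{SO}\text{-}\mathrm{KROM}^{r}$.
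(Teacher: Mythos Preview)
Your proposal is correct. The Skolemization argument works exactly as you describe: viewing relations over $B$ as relations over $A$, restricting the Skolem witnesses $F_i$ to $B$, and invoking absoluteness of the quantifier-free matrix between an induced substructure and the ambient structure is sound, and you correctly track the dependency pattern $F_i=F_i(X_1,\dots,X_i)$ imposed by the prefix.

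The paper takes a slightly different, more modular route. Rather than Skolemizing the whole prefix at once, it argues in two lines: first, the matrix $\forall\bar{x}(C_1\wedge\cdots\wedge C_n)$ is a universal first-order sentence and hence closed under substructures; second, applying a single second-order quantifier (of either type) to a formula that is closed under substructures yields a formula that is again closed under substructures, so one may peel off the prefix inductively. Your argument and the paper's are equivalent in spirit---both ultimately rest on the absoluteness of quantifier-free formulas and on the fact that a relation over $B$ is already a relation over $A$---but the paper's version is shorter and immediately reusable for any second-order prefix, while yours has the virtue of making the witness relations on $\mathcal{B}$ completely explicit. Your closing remark that the Krom restriction plays no role is also on target: the paper's inductive argument makes this equally clear, since it never inspects the shape of the clauses.
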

\begin{proof}
	All universal first-order formulas are closed under substructures. It is easy to check that the formula obtained by quantifying a relation in a formula which is closed under substructures still preserves this property.
\end{proof}

\begin{prop}\label{prop-datacompekrom}
	The data complexity of $\mathrm{SO\text{-}EKROM}$ is in $\mathrm{co\text{-}NP}$.
\end{prop}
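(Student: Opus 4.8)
The plan is to prove the bound by showing that for each fixed $\mathrm{SO}\text{-}\mathrm{EKROM}$ formula $\Phi=\forall X_{1}\exists Y_{1}\cdots\forall X_{k}\exists Y_{k}\forall\bar{x}(C_{1}\wedge\cdots\wedge C_{n})$ the set of structures with $\mathcal{A}\not\models\Phi$ is decidable in $\mathrm{NP}$. Negating through the prefix, $\mathcal{A}\not\models\Phi$ iff $\exists X_{1}\forall Y_{1}\cdots\exists X_{k}\forall Y_{k}\,\exists\bar{x}\bigvee_{i}\neg C_{i}$ holds in $\mathcal{A}$. Each $\neg C_{i}$ is a conjunction of first-order- and $X$-literals together with at most two literals on $Y_{1},\dots,Y_{k}$; I will call such a conjunction \emph{separated}, with its $X$-part and its (size $\leq 2$) $Y$-part, and the smallness of the $Y$-part — which is exactly the Krom restriction on the $Y_{i}$ — is what will be exploited throughout. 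Grounding $\exists\bar{x}$ over all tuples $\bar{a}\in A^{|\bar{x}|}$ turns $\exists\bar{x}\bigvee_{i}\neg C_{i}$ into a polynomial-size disjunction of separated terms, each still with at most two $Y$-literals.

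First I would handle the innermost block $\forall Y_{k}$. Once $\mathcal{A}$ and interpretations of $X_{1},\dots,X_{k},Y_{1},\dots,Y_{k-1}$ are fixed, every grounded term whose first-order-, $X$-, and $Y_{<k}$-parts are all true reduces to a conjunction of at most two literals on the atoms $Y_{k}[\bar{b}]$, while every other term collapses to a constant. Hence $\forall Y_{k}\,\exists\bar{x}\bigvee_{i}\neg C_{i}$ holds iff one of these terms is already $\top$, or the $2$-$\mathrm{DNF}$ over $\{Y_{k}[\bar{b}]\}$ built from the $\leq 2$-literal terms is valid, i.e. its negation — a $2$-$\mathrm{CNF}$ over $\{Y_{k}[\bar{b}]\}$ — is unsatisfiable. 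Unsatisfiability of a $2$-$\mathrm{CNF}$ is decidable in polynomial time (implication graph / strongly connected components), so the innermost quantifier block contributes only a polynomial-time test.

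The heart of the argument is to propagate this test outward through the alternating blocks $\exists X_{k},\forall Y_{k-1},\dots,\exists X_{1}$ while staying in $\mathrm{NP}$. Here I would run an induction from the inside out, maintaining the invariant that, after grounding, the statement about the not-yet-quantified relation atoms is the unsatisfiability of a $2$-$\mathrm{CNF}$ over the atoms of the innermost surviving $Y$-block, parametrised by the surviving $X$-atoms. To absorb a further $\forall Y_{j}$ one re-runs the $2$-$\mathrm{DNF}$/$2$-$\mathrm{CNF}$ reduction above; to absorb an $\exists X_{j}$ one uses that $2$-$\mathrm{CNF}$s are closed under conjunction and under existentially projecting away relation atoms (resolution on $2$-clauses produces $2$-clauses). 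If the invariant survives, then deciding $\mathcal{A}\not\models\Phi$ reduces to nondeterministically choosing interpretations for $X_{1},\dots,X_{k}$ — which is possible in polynomial time, since each $X_{i}$ has fixed arity and hence only $|A|^{\mathrm{arity}(X_{i})}$ bits — and then checking the accumulated $2$-$\mathrm{CNF}$ unsatisfiable in polynomial time. This gives an $\mathrm{NP}$ decision procedure for $\mathcal{A}\not\models\Phi$, and hence $\mathrm{co}\text{-}\mathrm{NP}$ membership for $\Phi$.

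The step I expect to be the main obstacle is precisely the maintenance of that invariant across the interleaving of $\forall X_{j}$ and $\exists Y_{j}$: I must check that pushing a $\forall Y_{j}$-condition out past the preceding $\exists X_{j}$, and then past the next $\forall Y_{j-1}$, genuinely reproduces a $2$-satisfiability condition rather than climbing one level of the polynomial hierarchy at each block. This is exactly the place where the hypothesis that every clause carries at most two $Y$-literals is indispensable — for second-order formulas with this same quantifier prefix but arbitrary clauses one only gets a $\Pi^{p}_{2k}$ bound, so the whole saving comes from the Krom structure of the $Y_{i}$ occurrences.
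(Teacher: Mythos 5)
Your grounding of $\Phi$ over $\mathcal{A}$ into a polynomial-size quantified Boolean formula, and your treatment of the innermost $\forall Y_k$ block via $2$-CNF (un)satisfiability, match the first half of the paper's argument. But the paper then simply invokes the theorem of Fl\"{o}gel et al.\ that the evaluation problem for quantified \emph{extended} $2$-CNF formulas with a bounded number of quantifier alternations is in co-NP, whereas you attempt to prove that theorem from scratch by an inside-out induction --- and that induction does not go through as described. The invariant you want to maintain (``the condition is unsatisfiability of a $2$-CNF over the innermost $Y$-atoms, parametrised by the surviving $X$-atoms'') is not itself a $2$-CNF condition in the $X$-atoms: a clause of $\Phi$ may contain arbitrarily many $X$-literals, and after grounding these act as guards selecting \emph{which} $2$-clauses over the $Y$-atoms are present. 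Since the $X$-atoms do not occur inside those $2$-clauses, ``resolution on $2$-clauses produces $2$-clauses'' says nothing about projecting them away, and the $\exists X_{j}\,\forall Y_{j-1}$ alternation is exactly where a level of the hierarchy threatens to be added. You flag this step yourself as the main obstacle, but it is the entire content of the proposition.

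Moreover, the procedure you end with --- guess interpretations of all of $X_{1},\dots,X_{k}$ upfront and then check one accumulated $2$-CNF unsatisfiable --- is unsound, not merely unjustified: in the negated formula $\exists X_{1}\forall Y_{1}\cdots\exists X_{k}\forall Y_{k}\,\exists\bar{x}\bigvee_{i}\neg C_{i}$ the witness for $X_{j}$ genuinely must depend on $Y_{1},\dots,Y_{j-1}$. Concretely, $\forall x_{1}\exists y_{1}\forall x_{2}\exists y_{2}\,\bigl((x_{2}\vee y_{1})\wedge(\neg x_{2}\vee\neg y_{1})\bigr)$ arises from legal extended Krom clauses (each has one $X$-literal and one $Y$-literal) and is false, so its negation $\exists x_{1}\forall y_{1}\exists x_{2}\forall y_{2}\,(x_{2}\leftrightarrow y_{1})$ is true; yet no upfront choice of $x_{1},x_{2}$ makes the matrix hold for all $y_{1}$, so your NP procedure would reject and wrongly report the original formula true. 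To complete the proof you should either cite the Fl\"{o}gel--Karpinski--Kleine B\"{u}ning result for bounded-alternation QE-2CNF as the paper does, or supply the genuinely nontrivial polynomial-size certificate argument that underlies it.
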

\begin{proof}
	Let $\Phi=\forall X_1 \exists Y_1 \dots \forall X_k \exists Y_k \forall \bar{x} (C_{1}\wedge\cdots\wedge C_{n})$ be a SO-EKROM formula over $\sigma$. For an arbitrary $\sigma$-structure $\mathcal{A}$, we  replace the first-order part $\forall \bar{x}(C_{1}\wedge\cdots\wedge C_{n}) $ by 
	$\bigwedge_{\bar{a}\in A^{|\bar{x}|}}(C_1\wedge \cdots\wedge C_n)[\bar{x}/\bar{a}]$. We remove the clauses with a formula $(\neg)R\bar{b}$ that is true in $\mathcal{A}$ and delete the formulas $(\neg)R\bar{b}$ that are false in $\mathcal{A}$ in every clause, where $R$ is a relation symbol in $\sigma$. 
	Then we replace each second-order quantifier $\forall X_i$ ($\exists Y_i$) in the prefix with a sequence $\forall X_i \bar{d}_1\dots \forall X_i \bar{d}_{|A|^{arity(X_i)}}$ ($\exists Y_i \bar{d}_1\dots \exists Y_i \bar{d}_{|A|^{arity(X_i)}}$) where each $\bar{d}_j\in A^{arity(X_i)}$. We treat the atoms $X_i \bar{d}_j$ ($Y_i \bar{d}_j$) as propositional variables, the resulting formula $\psi$ is a QE-2CNF formula. It is clear that $\mathcal{A}\models \Phi$ iff $\psi$ is true. It was proved that for any fixed number $m$, the evaluation problem for the QE-2CNF formulas whose quantifier prefixes have $m$ alternations is in $\mathrm{co\text{-}NP}$~\cite{floegel1990}. Hence, whether $\mathcal{A}\models \Phi$ holds is decidable in $\mathrm{co\text{-}NP}$.
\end{proof}

Since $\Pi^{1}_{1}$ captures co-NP, we can get the following corollary.
\begin{cor}\label{cor-soekrom_pi}
	Every $\mathrm{SO\text{-}EKROM}$ formula is equivalent to a $\Pi^{1}_{1}$ formula on ordered finite structures.
\end{cor}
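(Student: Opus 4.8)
The plan is to read off Corollary~\ref{cor-soekrom_pi} from the data-complexity bound of Proposition~\ref{prop-datacompekrom} together with the classical fact that $\Pi^1_1$ captures $\mathrm{co\text{-}NP}$ (the dual of Fagin's theorem~\cite{fagin1974}). Fix an $\mathrm{SO\text{-}EKROM}$ formula $\Phi$ over a vocabulary $\tau$ containing the ordering symbols, and let $K$ be the class of ordered finite $\tau$-structures satisfying $\Phi$. The construction in the proof of Proposition~\ref{prop-datacompekrom} turns any finite $\tau$-structure $\mathcal{A}$ into a QE-2CNF instance $\psi$ with $\mathcal{A}\models\Phi$ iff $\psi$ is true; on ordered structures $\mathcal{A}$ is presented by the canonical encoding supplied by its order, so this exhibits $K$ as a $\mathrm{co\text{-}NP}$ language (the truth of $\psi$ being a $\mathrm{co\text{-}NP}$ problem by~\cite{floegel1990}).

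Next I would invoke the capturing theorem: since $\Pi^1_1$ captures $\mathrm{co\text{-}NP}$, every isomorphism-closed class of ordered finite $\tau$-structures that lies in $\mathrm{co\text{-}NP}$ is definable by a $\Pi^1_1(\tau)$ formula, whose first-order vocabulary is still $\tau$ (Fagin's construction only adds fresh universally quantified second-order variables). Applied to $K$ this yields a $\Pi^1_1(\tau)$ formula $\Psi$ with $\mathcal{A}\models\Psi$ iff $\mathcal{A}\in K$ iff $\mathcal{A}\models\Phi$ for every ordered finite $\tau$-structure $\mathcal{A}$; hence $\Phi\equiv\Psi$ on ordered finite structures, which is the assertion. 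The only things to verify are bookkeeping: $K$ is closed under isomorphism because $\Phi$ is a sentence, and on ordered structures the linear order makes the encoding canonical, so ``$K\in\mathrm{co\text{-}NP}$'' is well posed and the second clause of the $\Pi^1_1$-capturing statement applies verbatim.

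I do not expect any genuine obstacle in this direction --- the substance is already in Proposition~\ref{prop-datacompekrom}. One clarifying remark is worth making, though: the hypothesis ``on ordered finite structures'' is not needed for the inclusion $\mathrm{SO\text{-}EKROM}\leq\Pi^1_1$ itself, because $\Pi^1_1$ captures $\mathrm{co\text{-}NP}$ already on all finite structures. It is retained because it is the natural setting for the companion statement (the reverse inclusion, which together with this corollary gives that the two logics coincide and both capture $\mathrm{co\text{-}NP}$): that reverse inclusion genuinely fails on arbitrary finite structures, since $\mathrm{SO\text{-}EKROM}$ is closed under substructures whereas $\Pi^1_1$ is not, and it is the substantially harder half, to be treated separately.
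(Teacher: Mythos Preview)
Your argument is correct and is exactly the paper's own: the paper simply says ``Since $\Pi^{1}_{1}$ captures $\mathrm{co\text{-}NP}$, we can get the following corollary,'' immediately after Proposition~\ref{prop-datacompekrom}, and you have spelled out precisely this inference. Your closing remark that the ordering hypothesis is not needed for this direction (since Fagin's theorem and its dual hold on all finite structures, and the proof of Proposition~\ref{prop-datacompekrom} does not use the order) is also correct, as is the observation that the reverse inclusion genuinely requires order because $\mathrm{SO\text{-}EKROM}$ is closed under substructures.
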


\begin{prop}\label{prop-pikromr_pi12ekrom}
	$\Pi^{1}_{2}\text{-}\mathrm{KROM}^r \leq \Pi^{1}_{2}\text{-}\mathrm{EKROM}$ on ordered finite structures. 
\end{prop}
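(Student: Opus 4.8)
The plan is to avoid a clause-by-clause rewriting — which fails, since once $R$ is existentially quantified the connective $\exists\bar z\,R\bar z$ of $\mathrm{KROM}^{r}$ cannot be simulated inside a single extended-Krom clause (enforcing ``tightness'' of a scanning relation would cost three literals over an existential second-order variable) — and instead to argue by completeness, in the style of Proposition~\ref{prop-exprness}. By Proposition~\ref{prop-datacomp}, any $\Pi^1_2\text{-}\mathrm{KROM}^{r}$ formula $\Phi$ over a vocabulary $\sigma$ defines a property in $\Pi^p_1=\mathrm{co\text{-}NP}$. So it is enough to (i) produce one $\Pi^1_2\text{-}\mathrm{EKROM}$ formula $\Theta$ defining a $\mathrm{co\text{-}NP}$-complete problem, (ii) check that $\Pi^1_2\text{-}\mathrm{EKROM}$ is closed under quantifier-free interpretations, and (iii) pull $\Theta$ back along a quantifier-free interpretation $\Pi$ reducing the property of $\Phi$ to that problem; such a $\Pi$ exists on ordered finite structures because the problem chosen is complete for $\mathrm{co\text{-}NP}$ under quantifier-free interpretations there.

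For the complete problem I would take the evaluation of $\Pi_1\text{-}\mathrm{DNF}$, which is $\Pi^p_1$-complete by the duality noted before Proposition~\ref{prop-phINkrom} (a $\mathrm{DNF}$ $\phi=D_1\vee\cdots\vee D_m$ satisfies $\forall\bar x\,\phi$ iff $\phi$ is a tautology iff $\neg\phi$, a $\mathrm{CNF}$, is unsatisfiable). Encoding an instance over $\tau=\{\mathrm{Clause},\mathrm{Var}_1,\mathrm{Pos},\mathrm{Neg}\}$ as in Proposition~\ref{prop-phINkrom}, the formula $\Theta$ would be
\[
\forall X_1\,\exists S\,\forall z\,\forall z'\,\forall y\ \Bigl( S(\mathbf{min})\ \wedge\ \delta_{\mathrm{dc}}\ \wedge\ \delta_{\mathrm{cl}}\ \wedge\ \delta_{\mathrm{pos}}\ \wedge\ \delta_{\mathrm{neg}}\ \wedge\ \delta_{\mathbf{max}}\Bigr),
\]
with $X_1$ the universally quantified assignment, $S$ the existentially quantified relation meant to be a downward-closed initial segment pointing at a satisfied clause $c:=\max S$, and
\[
\delta_{\mathrm{dc}}=\neg S(z)\vee\neg\mathrm{SUCC}(z',z)\vee S(z'),\qquad \delta_{\mathrm{cl}}=\neg S(z)\vee S(z')\vee\neg\mathrm{SUCC}(z,z')\vee\mathrm{Clause}(z),
\]
\[
\delta_{\mathrm{pos}}=\neg S(z)\vee S(z')\vee\neg\mathrm{SUCC}(z,z')\vee\neg\mathrm{Pos}(z,y)\vee X_1(y),\quad \delta_{\mathrm{neg}}=\neg S(z)\vee S(z')\vee\neg\mathrm{SUCC}(z,z')\vee\neg\mathrm{Neg}(z,y)\vee\neg X_1(y),
\]
and $\delta_{\mathbf{max}}$ the routine conjunction of companion clauses for the case $\max S=\mathbf{max}$ (e.g.\ $\neg S(\mathbf{max})\vee\mathrm{Clause}(\mathbf{max})$). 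The unit clause $S(\mathbf{min})$ together with $\delta_{\mathrm{dc}}$ forces $S$ to be a non-empty initial segment, so $c=\max S$ — the unique $z$ with $S(z)\wedge\neg S(\mathrm{SUCC}(z))$ — is defined, $\delta_{\mathrm{cl}}$ forces $\mathrm{Clause}(c)$, and $\delta_{\mathrm{pos}},\delta_{\mathrm{neg}}$ force every literal of $c$ to be true under $X_1$. Each conjunct has at most the two literals $\neg S(z),S(z')$ over the existential variable $S$, everything else lying over $\tau\cup\{X_1\}$, so $\Theta$ is a legitimate $\Pi^1_2\text{-}\mathrm{EKROM}$ formula; and it holds on an encoding of $\phi$ exactly when every assignment makes some $D_j$ true, i.e.\ exactly when $\forall\bar x\,\phi$ is true. (Small domains are handled via isomorphism types, as in Proposition~\ref{prop-exprness}.)

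Step (ii) is routine, following the construction of $\Psi^{-\Pi}$ in Proposition~\ref{prop-exprness}: relativize the first-order quantifiers to $\pi_{\mathrm{uni}}$, replace each first-order variable by a tuple of the width of $\Pi$, and replace each $\tau$-atom by its defining quantifier-free $\sigma$-formula, written in $\mathrm{CNF}$ at a positive and $\mathrm{DNF}$ at a negative occurrence so that after distribution every clause stays a disjunction of literals; the second-order prefix is untouched and no literal over an existential second-order variable is added, so the two-literal bound survives and $\Theta^{-\Pi}$ is a $\Pi^1_2\text{-}\mathrm{EKROM}(\sigma)$ formula. Taking $\Pi$ to reduce the property defined by $\Phi$ to $\Pi_1\text{-}\mathrm{DNF}$ evaluation then gives $\Phi\equiv\Theta^{-\Pi}$ on ordered finite structures.

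The hard part is the design of $\Theta$: ``there is a satisfied clause'' is an existential projection that must be expressed without a first-order existential in the matrix and without more than two existential-second-order literals in any clause. The downward-closed pointer $S$, whose non-emptiness is forced by the \emph{constant} unit clause $S(\mathbf{min})$ (legal since every clause is $\geq\mathbf{min}$) instead of by an unavailable ``$\exists z\,S(z)$'', with the pointed element recovered as $\max S$, is precisely the device that resolves this — and it is where the linear order ($\mathbf{min}$, $\mathrm{SUCC}$) is used, which is why the result is stated only over ordered structures.
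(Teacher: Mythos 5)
Your construction is sound, but it takes a genuinely different and much heavier route than the paper. The paper's proof is three lines: treat the universally quantified relation variables $\olsi{X}$ as new relation symbols of the vocabulary, observe that the inner block $\exists\ols{Y}\forall\bar{x}\varphi$ is then a $\Sigma^1_1\text{-}\mathrm{KROM}^r$ sentence over the expanded vocabulary, replace it by an equivalent $\Sigma^1_1\text{-}\mathrm{KROM}$ sentence using Corollary~\ref{coro-sigcapnl} (both fragments capture NL on ordered structures), and note that re-prefixing $\forall\olsi{X}$ yields, by the very definition of extended Krom clauses (arbitrary literals over $\tau\cup\{X_1,\dots,X_k\}$, at most two over the $Y_i$), a $\Pi^1_2\text{-}\mathrm{EKROM}$ formula. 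What you do instead is re-prove from scratch the downstream fact that $\Pi^1_2\text{-}\mathrm{EKROM}$ can express every co-NP property of ordered structures: data complexity in co-NP via Proposition~\ref{prop-datacomp}, a hand-built $\Pi^1_2\text{-}\mathrm{EKROM}$ sentence $\Theta$ for $\Pi_1$-DNF evaluation, and a pull-back along a quantifier-free interpretation. Your $\Theta$ is correct and the initial-segment device (forcing non-emptiness of $S$ by the unit clause $S(\mathbf{min})$ and downward closure, then reading off the witnessing clause as $\max S$) is a genuinely nice way to avoid the $\exists z\,Yz$ connective that Proposition~\ref{prop-phINkrom} needs; it is essentially a hand-rolled fragment of Gr\"adel's NL-capturing argument, whereas the paper simply cites that theorem. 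The trade-off: your argument is self-contained at the level of the matrix construction but only yields the result on ordered structures for the same reason the paper's does, at the cost of considerably more machinery.

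One step does need shoring up. You assert that $\Pi_1$-DNF evaluation is co-NP-complete under quantifier-free interpretations on ordered structures and that the required interpretation $\Pi$ for the property of $\Phi$ therefore exists; this is true but not free. The clean way to justify it inside this paper is: the property of $\Phi$ is in co-NP (Proposition~\ref{prop-datacomp}, $k=1$), hence $\Pi^1_1$-definable on ordered structures since $\Pi^1_1$ captures co-NP, and the proof of Proposition~\ref{prop-exprness} (the odd case $k=1$) constructs exactly a quantifier-free interpretation sending $\mathcal{A}$ to an encoding of a $\Pi_1$-DNF instance that is true iff $\mathcal{A}\models\Phi$. You must also supply quantifier-free definitions of $\leq$, $\mathrm{SUCC}$, $\mathbf{min}$, $\mathbf{max}$ on the $d$-tuple domain of $\mathcal{A}^{\Pi}$ (lexicographic order), since your $\Theta$, unlike the $\Psi$ of Proposition~\ref{prop-phINkrom}, genuinely uses the order of the \emph{interpreted} structure. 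Both points are routine, but as written they are gaps.
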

\begin{proof}
	Let $\forall \olsi{X} \exists \ols{Y} \forall \bar{x}\varphi$ be a $\Pi^{1}_{2}$-KROM$^r$ formula. 
	We see that $\exists \ols{Y} \forall \bar{x}\varphi$ is a $\Sigma^1_1$-KROM$^r$ formula. By Corollary~\ref{coro-sigcapnl}, it is equivalent to a $\Sigma^{1}_{1}\text{-KROM}$ formula on ordered finite structures. This implies that $\forall \olsi{X} \exists \ols{Y} \forall \bar{x}\varphi$ is equivalent to a $\Pi^{1}_{2}\text{-EKROM}$ formula on ordered finite structures.
\end{proof}

Combining Proposition~\ref{prop-exprness}, Corollary~\ref{cor-soekrom_pi} and Proposition~\ref{prop-pikromr_pi12ekrom} gives the following corollary.
\begin{cor}
	$\mathrm{SO\text{-}EKROM} \equiv \Pi^{1}_{2}\text{-}\mathrm{EKROM} \equiv \Pi^{1}_{1}$ on ordered finite structures.
\end{cor}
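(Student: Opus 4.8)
The plan is to close a small cycle of containments among the four logics involved and then read off the stated equivalences. All logics are considered over ordered finite structures, and the cycle I would establish is
\[
\mathrm{SO\text{-}EKROM}\;\le\;\Pi^1_1\;\le\;\Pi^1_2\text{-}\mathrm{KROM}^r\;\le\;\Pi^1_2\text{-}\mathrm{EKROM}\;\le\;\mathrm{SO\text{-}EKROM},
\]
so that each containment in it is actually an equivalence; restricting to the last three logics then gives $\mathrm{SO\text{-}EKROM}\equiv\Pi^1_2\text{-}\mathrm{EKROM}\equiv\Pi^1_1$, which is the corollary.

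Three of the four arrows come straight from earlier results. The arrow $\mathrm{SO\text{-}EKROM}\le\Pi^1_1$ is Corollary~\ref{cor-soekrom_pi}. The arrow $\Pi^1_1\le\Pi^1_2\text{-}\mathrm{KROM}^r$ is the case $k=1$ of Proposition~\ref{prop-exprness} (since $1$ is odd, that proposition yields $\Pi^1_1\le\Pi^1_2\text{-}\mathrm{KROM}^r$), and it even holds over all finite structures, hence a fortiori over ordered ones. The arrow $\Pi^1_2\text{-}\mathrm{KROM}^r\le\Pi^1_2\text{-}\mathrm{EKROM}$ is exactly Proposition~\ref{prop-pikromr_pi12ekrom}.

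The one remaining arrow, $\Pi^1_2\text{-}\mathrm{EKROM}\le\mathrm{SO\text{-}EKROM}$, is the only step that needs its own (purely syntactic) argument. A $\Pi^1_2\text{-}\mathrm{EKROM}$ formula has the form $\forall\bar X\exists\bar Y\,\forall\bar x\,\psi$, where $\psi$ is a conjunction of extended Krom clauses over $\tau\cup\{\bar X\}$ in which the variables of $\bar Y$ appear only as Krom literals $Y_i\bar z$ or $\neg Y_i\bar z$. I would convert its second-order prefix into the strictly alternating shape $\forall Z_1\exists W_1\cdots\forall Z_k\exists W_k$ demanded by SO-EKROM by interleaving fresh dummy relation variables of the opposite type, while keeping every genuine existential variable of $\bar Y$ after every genuine universal variable of $\bar X$ so that no Skolem dependency is lost. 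Because every structure here is finite and hence nonempty, an unused $\exists W_i$ is vacuous and an unused $\forall Z_i$ may be dropped, so the rewriting preserves truth; the matrix is untouched and remains a conjunction of extended Krom clauses, so the result is a genuine SO-EKROM formula.

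Putting the four arrows together closes the cycle and forces equality throughout, which is the corollary. There is no real obstacle in a single step; the care needed is bookkeeping — checking that each invoked containment is licensed specifically over \emph{ordered} finite structures (Proposition~\ref{prop-pikromr_pi12ekrom} and Corollary~\ref{cor-soekrom_pi} carry this hypothesis, Proposition~\ref{prop-exprness} is unconditional, and the last arrow is syntactic), and verifying that the dummy-quantifier padding in the last arrow leaves the extended Krom clause structure intact.
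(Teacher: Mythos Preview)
Your proof is correct and follows essentially the same route as the paper: the paper simply states that the corollary is obtained by combining Proposition~\ref{prop-exprness}, Corollary~\ref{cor-soekrom_pi}, and Proposition~\ref{prop-pikromr_pi12ekrom}, which is exactly the cycle you set up (with $\Pi^1_2\text{-}\mathrm{KROM}^r$ as the intermediary). Your explicit syntactic padding argument for $\Pi^1_2\text{-}\mathrm{EKROM}\le\mathrm{SO\text{-}EKROM}$ is the one step the paper leaves unspoken, treating it as immediate from the definitions.
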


\begin{thm}
	Both $\mathrm{SO\text{-}EKROM}$ and $\Pi^{1}_{2}\text{-}\mathrm{EKROM}$ can capture $\mathrm{co\text{-}NP}$ on ordered finite structures.
\end{thm}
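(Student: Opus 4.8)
The plan is to assemble the statement from the pieces already established, reading ``captures co-NP'' as the conjunction of a data-complexity upper bound and an expressibility lower bound, and to treat the two logics in parallel. First I would dispatch the data-complexity side. Proposition~\ref{prop-datacompekrom} already shows that the data complexity of SO-EKROM is in co-NP. Since $\Pi^{1}_{2}$-EKROM is, by definition, the syntactic fragment of SO-EKROM in which the second-order prefix has a single alternation starting with a universal block, every $\Pi^{1}_{2}$-EKROM formula is an SO-EKROM formula, so its data complexity is in co-NP as well, with no separate argument needed. Thus for every SO-EKROM or $\Pi^{1}_{2}$-EKROM formula $\varphi$, the class of (ordered) finite models of $\varphi$ lies in co-NP.

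Next I would handle the expressibility side. It is classical, as the dual of Fagin's theorem, that $\Pi^{1}_{1}$ captures co-NP on all finite structures, hence every class of ordered finite structures in co-NP is definable by a $\Pi^{1}_{1}$ formula. The corollary immediately preceding the theorem states that $\mathrm{SO\text{-}EKROM} \equiv \Pi^{1}_{2}\text{-}\mathrm{EKROM} \equiv \Pi^{1}_{1}$ on ordered finite structures; the direction I need is the chain $\Pi^{1}_{1} \leq \Pi^{1}_{2}\text{-}\mathrm{KROM}^{r}$ (Proposition~\ref{prop-exprness} with $k=1$) $\leq \Pi^{1}_{2}\text{-}\mathrm{EKROM}$ (Proposition~\ref{prop-pikromr_pi12ekrom}) $\leq \mathrm{SO\text{-}EKROM}$ on ordered finite structures. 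Consequently, any $\Pi^{1}_{1}$ formula defining a given co-NP class of ordered finite structures is equivalent, on ordered finite structures, to a $\Pi^{1}_{2}$-EKROM formula, hence also to an SO-EKROM formula, so that class is definable in both logics.

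Combining the two sides then gives that both SO-EKROM and $\Pi^{1}_{2}$-EKROM capture co-NP on ordered finite structures. I do not expect a genuine obstacle, since every ingredient is already available; the only points requiring a little care are to record explicitly that $\Pi^{1}_{2}$-EKROM, being a syntactic fragment of SO-EKROM, inherits the co-NP data-complexity bound of Proposition~\ref{prop-datacompekrom}, and dually that the expressibility chain is routed ``downward'' through $\Pi^{1}_{2}$-EKROM before reaching full SO-EKROM, so that the lower bound is first secured for the smaller logic and then transferred to the larger one.
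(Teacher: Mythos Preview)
Your proposal is correct and matches the paper's approach: the paper presents this theorem without a separate proof, treating it as an immediate consequence of the preceding corollary $\mathrm{SO\text{-}EKROM} \equiv \Pi^{1}_{2}\text{-}\mathrm{EKROM} \equiv \Pi^{1}_{1}$ on ordered finite structures together with the fact that $\Pi^{1}_{1}$ captures co-NP. Your explicit separation into the data-complexity upper bound (via Proposition~\ref{prop-datacompekrom}) and the expressibility lower bound (via the chain through Propositions~\ref{prop-exprness} and~\ref{prop-pikromr_pi12ekrom}) simply unpacks that corollary, so there is no substantive difference.
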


\section{Conclusion}\label{sec-concln}
In this paper, we introduce second-order revised Krom logic and study its expressive power and data complexity. SO-KROM$^{r}$ is an extension of SO-KROM by allowing $\exists \bar{z}R\bar{z}$ in the formula matrix, where $R$ is a second-order variable.  
For SO-KROM$^{r}$, we show that the innermost universal second-order quantifiers can be removed. Hence, $\Sigma^1_k$-KROM$^{r}\equiv \Sigma^1_{k+1}$-KROM$^{r}$ for odd $k$, and $\Pi^1_k$-KROM$^{r}\equiv \Pi^1_{k+1}$-KROM$^{r}$ for even $k$. 
SO-KROM collapses to its existential fragment. The same statement is unlikely to be true for SO-KROM$^{r}$.
On ordered finite structures, we prove that $\Sigma^1_1$-KROM$^r$ equals $\Sigma^1_1$-KROM, and captures NL. On all finite structures, we show that $\Sigma^1_{k}\equiv \Sigma^1_{k+1}$-KROM$^r$ for even $k$, and $\Pi^1_{k}\equiv \Pi^1_{k+1}$-KROM$^r$ for odd $k$. This result gives an alternative logic for capturing the polynomial hierarchy, which is the main contribution of the paper. We also study an extended version of second-order Krom logic SO-EKROM.
On ordered finite structures, SO-EKROM collapses to $\Pi^{1}_{2}$-EKROM and equals $\Pi^1_1$. Therefore, both of them can capture co-NP on ordered finite structures.

\bibliographystyle{alphaurl}
\bibliography{./reference}
\end{document}